\documentclass[conference,letter]{IEEEtran}
\pdfpagewidth=8.5in
\pdfpageheight=11in
\usepackage[T1]{fontenc}
\usepackage[utf8x]{inputenc}
\usepackage[english]{babel}
\usepackage{cite}
\usepackage{amsmath}
\usepackage{times}

\usepackage{amsthm}
\usepackage[labelformat=simple]{subcaption}

\usepackage[ruled,vlined,linesnumbered]{algorithm2e}
\usepackage{graphicx}
\usepackage{tikz}
\usepackage{booktabs}
\usepackage{diagbox}
\usepackage{color}

\DeclareRobustCommand*{\IEEEauthorrefmark}[1]{%
  \raisebox{0pt}[0pt][0pt]{\textsuperscript{\footnotesize\ensuremath{#1}}}}

\theoremstyle{plain}
\newtheorem{definition}{Definition}
\newtheorem{lemma}{Lemma}
\newtheorem{theorem}{Theorem}

\newtheorem{proposition}{Proposition}

\def \lo{\textit{low}-criticality }
\def \hi{\textit{high}-criticality }

\def \ull{U_{LO}^{LO} }
\def \ulh{U_{LO}^{HI}}
\def \uhl{U_{HI}^{LO} }
\def \uhh{U_{HI}^{HI} }



\begin{document}

\def \papertitle{EDF-VD Scheduling of Mixed-Criticality Systems with Degraded Quality Guarantees}

\title{\papertitle}

\author{\small
\authorblockN{
Di Liu\IEEEauthorrefmark{1},
Jelena Spasic\IEEEauthorrefmark{1}
Gang Chen\IEEEauthorrefmark{2},
Nan Guan\IEEEauthorrefmark{3},
Songran Liu\IEEEauthorrefmark{2},
Todor Stefanov\IEEEauthorrefmark{1},
Wang Yi\IEEEauthorrefmark{2,4}}
\authorblockA{\IEEEauthorrefmark{1} Leiden University, The Netherlands}
\authorblockA{\IEEEauthorrefmark{2} Northeastern University, China}
\authorblockA{\IEEEauthorrefmark{3} Hong Kong Polytechnic University, Hong Kong}
\authorblockA{\IEEEauthorrefmark{4} Uppsala University, Sweden}
}

\maketitle

\begin{abstract}
This paper studies real-time scheduling of mixed-criticality systems
where low-criticality tasks are still guaranteed some service in the
high-criticality mode, with reduced execution budgets.
First, we present a utilization-based schedulability test for such
systems under EDF-VD scheduling. Second, we quantify the suboptimality of
EDF-VD (with our test condition) in terms of speedup factors.
In general, the speedup factor is a function with respect to the ratio between the amount of resource 
required by different types of tasks in different criticality modes, and reaches 4/3 in the worst case.
Furthermore, we show that the proposed utilization-based schedulability test and speedup factor results 
apply to the elastic mixed-criticality model as well.
Experiments show effectiveness of our proposed method and
confirm the theoretical suboptimality results.

\end{abstract}

\section{Introduction}
\label{section:introduce}
An important trend in real-time embedded systems is to integrate applications with different criticality levels into a shared platform in order to reduce resource cost and energy consumption. To ensure the correctness of a \emph{mixed-criticality} (MC) system, highly critical tasks are subject to certification by Certification Authorities under extremely rigorous and pessimistic assumptions \cite{Baruah2010}. This generally causes large worst-case
execution time (WCET) estimation for high-criticality tasks.
On the other hand, the system designer needs to consider the timing requirement
of the entire system, but under less conservative assumptions.
The challenge in scheduling MC systems is to simultaneously
guarantee the timing correctness of (1) only high-criticality tasks
under very pessimistic assumptions, and (2) all tasks, including low-critical
ones, under less pessimistic assumptions.

The scheduling problem of MC systems has been intensively studied in recent years (see Section \ref{section:related} for a brief review).
In most of previous works, the timing correctness of high-criticality tasks
are guaranteed in the worst case scenario at the expense of low-criticality tasks.
More specifically, when any high criticality task executes for longer than its low-criticality WCET (and thus the system enters the
high-criticality mode), all low-criticality tasks will be completely
discarded and the resource are all dedicated for meeting the timing constraints of
high-criticality tasks \cite{Baruah2012,Baruah2011Response,Ekberg2013article,Easwaran2013}.  
However, such an approach seriously disturbs the service of low-criticality tasks. This
is not acceptable in many practical problems, especially for control systems
where the performance of controllers mainly depends on
the execution frequency of control tasks \cite{Su2013}.

To overcome this problem, 
Burns and Baruah in \cite{burns2013towards} introduced an MC task model where low-criticality tasks reduce their execution budgets such that low-criticality tasks are guaranteed to be scheduled in high-criticality mode with regular execution frequency (i.e., the same period) but degraded quality\footnote{In \cite{Liu1991}\cite{Liu1994}, the output quality of a task is related to its execution time. The longer a task executes, the better quality results it produces.}.
Since the idea of reducing execution budgets to keep tasks running is conceptually similar to the \textit{imprecise computation model} \cite{Liu1991}\cite{Liu1994},
in this paper we call an MC system with possibly reduced execution budgets of low criticality tasks an \textit{imprecise mixed criticality} (IMC) system.

In \cite{burns2013towards}, the authors consider preemptive fixed-priority scheduling for the IMC system model
and extend the adaptive mixed criticality (AMC) \cite{Baruah2011Response} approach to provide a schedulability test for the IMC model.
In this paper, we study the EDF-VD scheduling of IMC systems.
EDF-VD is designed for the classical MC system model,
in which EDF is enhanced by deadline adjustment mechanisms to compromise the
resource requirement on different criticality levels. EDF-VD has shown strong competence by both theoretical and empirical evaluations \cite{Baruah2012,Ekberg2013article,Easwaran2013}. For example, \cite{Baruah2012} proves that EDF-VD is a speedup-optimal MC scheduling and in \cite{Ekberg2013article,Easwaran2013} experimental evaluations show that EDF-VD outperforms other MC scheduling algorithms
in terms of acceptance ratio.
The main technical contributions of this paper include
\begin{itemize}
 \item  We propose a sufficient test for the IMC model under EDF-VD, - see Theorem \ref{theorem:overall} in Section \ref{section:sufficientTest};

 \item For the IMC model under EDF-VD, we derive a speedup factor function with respect to the utilization ratios of high criticality tasks and low criticality tasks
- see Theorem \ref{theorem:speedup} in Section \ref{section:speed}.
The derived speedup factor function enables us to quantify the suboptimality of EDF-VD and evaluate the impact of the utilization ratios on the speedup factor.
We also compute the maximum value $4/3$ of the speedup factor function,  which is equal to the speedup factor bound for the classical MC model \cite{Baruah2012}.

 \item With extensive experiments, we show that for the IMC model, by using our proposed sufficient test,
in most cases EDF-VD outperforms AMC \cite{burns2013towards} in terms of the number of schedulable task sets. Moreover,
the experimental results validate the observations we obtained for speedup factor.

\end{itemize}

Moreover, the schedulability test and speedup factor results of this paper also apply to the \textit{elastic mixed-criticality} (EMC) model proposed in \cite{Su2013}, where the periods of 
low-criticality tasks are scaled up in high-criticality mode, see in Section \ref{section:emc}.

The remainder of this paper is organized as follows:
Section \ref{section:related} discusses the related work.
Section \ref{section:background} gives the preliminaries and describes the IMC task model and its execution semantics.
Section \ref{section:sufficientTest} presents our sufficient test for the IMC model and Section
\ref{section:speed} derives the speedup factor function for the IMC under EDF-VD.
Section \ref{section:emc} extends the proposed sufficient test to the EMC model.
Finally, Section \ref{section:evaluation} shows our experimental results and Section \ref{section:conclusion} concludes this paper.

\section{Related Work}
\label{section:related}
Burns and Davis in \cite{burns2015mixed} give a comprehensive review of work on real-time scheduling for MC systems.
 Many of these literatures, e.g., \cite{Baruah2012}\cite{Ekberg2013article}\cite{Easwaran2013}, consider the classical MC model in which all low criticality tasks are discarded if the system switches to the high
criticality mode.
In \cite{burns2013towards}, Burns and Baruah discuss three approaches to keep some low criticality tasks running in \hi mode.
The first approach is to change the priority of low criticality tasks.
However, for fixed-priority scheduling, deprioritizing low criticality tasks cannot guarantee the execution of the low criticality tasks with a short deadline after the mode switches.  \cite{burns2013towards}.
Similarly, for EDF, lowering priority of low criticality tasks leads to a degraded service \cite{Huang2014}.
In this paper, we consider the IMC model which improves the schedulability of low criticality tasks in \hi mode by reducing their execution time.
The IMC model can guarantee the regular service of a system by trading off the quality of the produced results.
For some applications given in \cite{Liu1991}\cite{Liu1994}\cite{Ravindran2014}, such trade-off is preferred.
The second approach in \cite{burns2013towards} is to extend the periods of low criticality tasks when the system mode changes to \hi mode such that the low criticality tasks execute less frequently to ensure their schedulability.
Su \textit{et al.} \cite{Su2013}\cite{Su2014} and Jan \textit{et al.} \cite{jan2013maximizing} both consider this model.
However, some applications might prefer an on-time result with a degraded quality rather than a delayed result with a perfect quality.
Some example applications can be seen in \cite{Chung1990}\cite{Liu1991}\cite{Liu1994}.
Then, the approach of extending periods is less useful for this kind of applications.
The last approach proposed in \cite{burns2013towards} is to reduce the execution budget of low criticality tasks when the system mode switches, i.e., the use of the IMC model studied in this paper.
In \cite{burns2013towards}, the authors extend the AMC \cite{Baruah2011Response} approach to test the schedulability of an IMC task set under fixed-priority scheduling.
However, the schedulability problem for an IMC task set under EDF-VD \cite{Baruah2012}, has not yet been addressed.
Therefore, in this paper, we study the schedulability of the IMC task model under EDF-VD and propose a sufficient test for it.

\section{Preliminaries}
\label{section:background}
This section first introduces the IMC task model and its execution semantics. 
Then, we give a brief explanation for EDF-VD scheduling \cite{Baruah2012} and an example to illustrate the execution semantics of the IMC model under EDF-VD scheduling.
\subsection{Imprecise Mixed-Criticality Task Model}
\label{background:IMC}

We use the \textit{implicit-deadline sporadic} task model given in \cite{burns2013towards}
where a task set $\gamma$ includes $n$ tasks which are scheduled on a uniprocessor.
Without loss of generality, all tasks in $\gamma$ are assumed to start at time 0.
Each task $\tau_i$ in $\gamma$ generates an infinite sequence of jobs $\{J_i^1,J_i^2 ...\}$ and is characterized by $\tau_i = \{T_i,D_i,L_i,\mathcal{C}_i\}$:
\begin{itemize}
 \setlength\itemsep{0.1em}
 \item  $T_i$ is the period or the minimal separation interval between two consecutive jobs;
 \item  $D_i$ denotes the relative task deadline, where $D_i = T_i$;
  \item  $L_i \in \{LO, HI\}$ denotes the criticality (\textit{low or high}) of a task. In this paper, like in many previous research works \cite{Su2013}\cite{Huang2014}\cite{Baruah2012}\cite{Ekberg2013article}\cite{Easwaran2013},
  we consider a duel-criticality MC model. Then, we split tasks into two task sets, $\gamma_{LO} = \{\tau_i|L_i = LO\}$ and $\gamma_{HI} = \{\tau_i|L_i = HI\}$;
  \item  $\mathcal{C}_i = \{C_i^{LO},C_i^{HI}\}$ is a list of WCETs,
where $C_i^{LO}$ and $C_i^{HI}$ represent the WCET in \lo mode and the WCET in \hi mode, respectively.
For a \hi task, it has $C_i^{LO} \le C_i^{HI}$,
whereas \textit{$C_i^{LO} \ge C_i^{HI}$ for a \lo task, i.e., \lo task $\tau_i$ has a reduced WCET in \hi mode}.
\end{itemize}
Then each job $J_i$ is characterized by $J_i = \{a_i,d_i,L_i,\mathcal{C}_i\}$, where $a_i$ is the absolute release time and $d_i$ is the absolute deadline. 
Note that if \lo task $\tau_i$ has $C_i^{HI} = 0$, it will be immediately discarded at the time of the switch to \hi mode.
In this case, the IMC model behaves like the classical MC model.

The utilization of a task is used to denote the ratio between its WCET and its period.
We define the following utilizations for an IMC task set $\gamma$:
\begin{itemize}
 \setlength\itemsep{0.1em}
 \item For every task $\tau_i$, it has $u_i^{LO} = \frac{C_i^{LO}}{T_i}$, $u_i^{HI} = \frac{C_i^{HI}}{T_i}$;
 \item For all \lo tasks, we have total utilizations
\[\tiny \ull = \sum_{\forall \tau_i \in \gamma_{LO}}u_i^{LO}, \quad \ulh = \sum_{\forall \tau_i \in \gamma_{LO}}u_i^{HI}\]
 \item For all \hi tasks, we have total utilizations
\[\tiny \uhl = \sum_{\forall \tau_i \in \gamma_{HI}}u_i^{LO}, \quad \uhh = \sum_{\forall \tau_i \in \gamma_{HI}}u_i^{HI}\]
\item For an IMC task set, we have
 \[U^{LO} = \ull+\uhl, \quad U^{HI} = \ulh+\uhh\]
\end{itemize}

\subsection{Execution Semantics of the IMC Model}
\label{background:schematic}

The execution semantics of the IMC model are similar to those of the classical MC model.
The \textbf{major difference} occurs after a system switches to \hi mode.
\textit{Instead of discarding all \lo tasks, as it is done in the classical MC model, the IMC model tries to schedule \lo tasks with their reduced execution times $C_i^{HI}$. }
The execution semantics of the IMC model are summarized as follows:
\begin{itemize}
 \setlength\itemsep{0.1em}
\item The system starts in \lo mode, and remains in this mode as long as no \hi job overruns its \lo WCET $C_i^{LO}$.
If any job of a \lo task tries to execute beyond its $C_i^{LO}$, the system will suspend it and launch a new job at the next period;
\item If any job of \hi task executes for its $C_i^{LO}$ time units without signaling completion,
the system immediately switches to \hi mode;
\item As the system switches to \hi mode, \textit{if jobs of \lo tasks have completed execution for more than their $C_i^{HI}$ but less than their $C_i^{LO}$, the jobs will be suspended till the tasks release new jobs for the next period.
However, if jobs of \lo tasks have not completed their $C_i^{HI}$ ($\le C_i^{LO}$) by the switch time instant,
the jobs will complete the left execution to $C_i^{HI}$ after the switch time instant and before their deadlines. Hereafter, all jobs are scheduled using $C_i^{HI}$.}
For \hi tasks, if their jobs have not completed their $C_i^{LO}$ ($\le C_i^{HI}$) by the switch time instant,
all jobs will continue to be scheduled to complete $C_i^{HI}$. After that, all jobs are scheduled using $C_i^{HI}$.
\end{itemize}
Santy \textit{et al.} \cite{Santy2012} have shown that the system can switch back from the \hi mode to the \lo mode when there is an idle period and no \hi job awaits for execution.
For the IMC model, we can use the same
scenario to trigger the switch-back. In this paper, we focus on the switch from \lo mode to \hi mode.

\subsection{EDF-VD Scheduling}
The challenge to schedule MC tasks with EDF scheduling algorithm \cite{liu1973scheduling} is to deal with the overrun of \hi tasks when the system switches from \lo mode to \hi  mode. 
 Baruah \textit{et al.} in \cite{Baruah2012} proposed to artificially tighten deadlines of jobs of \hi tasks in \lo mode 
such that the system can preserve execution budgets for the \hi tasks across mode switches. This approach is called \textit{EDF with virtual deadlines} (EDF-VD).

\subsection{An Illustrative Example}

\begin{table}
\centering
 \begin{tabular}{|l|l|l|l|l|l|}
\hline
   Task & $L$ & $C_i^{LO}$ & $C_i^{HI}$ & $T_i$ & $\hat{D_i}$\\ \hline
  $\tau_1$ & LO & 4 & 2 & 9 & \\ \hline
  $\tau_2$ & HI & 4 & 7 & 10 & 7  \\ \hline
 \end{tabular}
\caption{Illustrative example}
\label{table:example_Imc}
\end{table}

\begin{figure}
 \centering
\includegraphics[width=0.8\columnwidth]{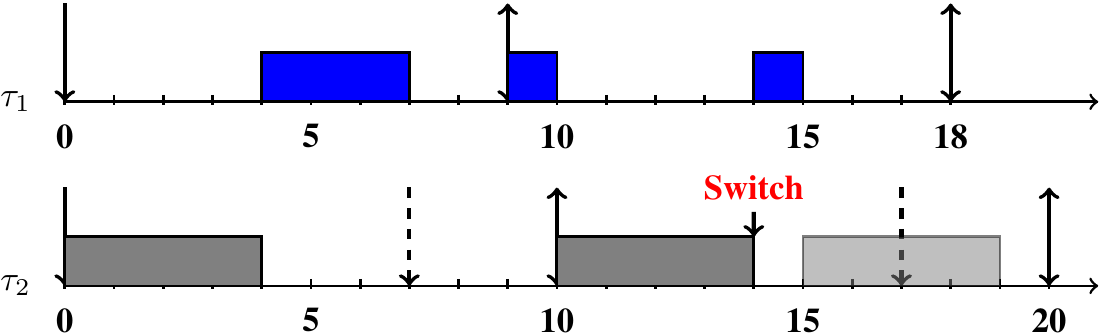}
\caption{Scheduling of Example \ref{table:example_Imc}}
\label{fig:imc_example}
\end{figure}

Here, we give a simple example to illustrate the execution semantics of the IMC model under EDF-VD.
Table \ref{table:example_Imc} gives two tasks, one \lo task $\tau_1$ and one \hi task $\tau_2$,
where $\hat{D_i}$ is the virtual deadline. Figure \ref{fig:imc_example} depicts the scheduling
of the given IMC task set, where we assume that the mode switch occurs in the second period of $\tau_2$. When the system switches to \hi mode, $\tau_2$ will be scheduled by its original deadline $10$ instead of its virtual deadline $7$. Hence, $\tau_1$ preempts $\tau_2$ at the switch time instant. Since in \hi mode $\tau_1$ only has execution budget of 2 , i.e., $C_1^{HI}$, $\tau_1$ executes one unit and suspends. Then, $\tau_2$ completes its left execution $4$ ($C_2^{HI}-C_2^{LO}$) before its deadline.

\section{Schedulability Analysis}
\label{section:sufficientTest}
In \cite{burns2013towards}, an AMC-based schedulability test for the IMC model under fixed priority scheduling has been proposed.
However, to date, a schedulability test for the IMC model under EDF-VD
has not been addressed yet.
Therefore, inspired by the work in \cite{Baruah2012} for the classical MC model, we propose a sufficient test for the IMC model under EDF-VD.

\subsection{Low Criticality Mode}
\label{lowmode}
We first ensure the schedulability of tasks when they are in \lo  mode.
As the task model is in \lo  mode, the tasks can be considered as traditional real-time tasks scheduled by EDF with virtual deadlines (VD). The following theorem is given in \cite{Baruah2012} for tasks scheduled in \lo  mode.
\begin{theorem}[Theorem 1 from \cite{Baruah2012}]
\label{theorem:lowcrit}
 The following condition is sufficient for ensuring that EDF-VD successfully schedules all tasks in \lo  mode:
 {\small
 \begin{equation}
  \label{equation:edvVD_low_bound}
 x \ge \frac{\uhl}{1 - \ull}
 \end{equation}
 }
where $x\in(0,1)$ is used to uniformly modify the relative deadline of \hi tasks.
\end{theorem}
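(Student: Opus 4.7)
The plan is to reduce the question to a standard EDF schedulability test for constrained-deadline sporadic tasks. In LO mode, no HI job overruns its LO WCET, so EDF-VD simply runs EDF on a task set whose jobs have worst-case execution times $C_i^{LO}$ and whose (modified) relative deadlines are $T_i$ for $\tau_i \in \gamma_{LO}$ and $\hat{D}_i = x\,T_i$ for $\tau_i \in \gamma_{HI}$. Since $x \in (0,1)$, each HI task has a deadline no larger than its period, so the resulting system is a constrained-deadline implicit-release sporadic task system scheduled by EDF.

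I would then invoke the classical sufficient density test for EDF on constrained-deadline sporadic tasks, which says that the system is schedulable provided
\[
\sum_{\tau_i} \frac{C_i}{\min(D_i,T_i)} \le 1.
\]
Computing each term: a LO task contributes $C_i^{LO}/T_i = u_i^{LO}$, while a HI task contributes $C_i^{LO}/(x\,T_i) = u_i^{LO}/x$. Summing over the two groups and using the notation of Section \ref{background:IMC} gives the condition
\[
U_{LO}^{LO} + \frac{U_{HI}^{LO}}{x} \le 1.
\]
Solving for $x$ (noting $U_{LO}^{LO} < 1$ is implicit, otherwise even the LO-only subset is infeasible) rearranges exactly into $x \ge U_{HI}^{LO} / (1 - U_{LO}^{LO})$, which is the stated bound.

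There is no real obstacle in this argument; the only thing to be careful about is justifying the density test itself for this slightly non-standard setting (implicit-deadline tasks whose deadlines have been artificially shortened by the factor $x$), but this is exactly the constrained-deadline sporadic model, for which the density test is a well-known sufficient condition and can simply be cited. The result is therefore essentially a rewriting of the EDF density test after substituting the virtual deadlines; no inductive or demand-bound-function argument is required at this stage, since sharper tests are only needed when we later have to cope with the mode switch in Section \ref{section:sufficientTest}.
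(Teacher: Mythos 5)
Your proof is correct. The paper does not actually prove Theorem~\ref{theorem:lowcrit}---it imports it verbatim from \cite{Baruah2012} and merely remarks that the IMC model coincides with the classical MC model while in \lo mode---and the density-test argument you give (each \lo task contributes $u_i^{LO}$, each \hi task contributes $C_i^{LO}/(xT_i)=u_i^{LO}/x$ against its virtual deadline, so $\ull + \uhl/x \le 1$ rearranges to the stated bound under the implicit assumption $\ull < 1$) is precisely the standard argument used in that cited source, so there is nothing to object to.
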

Since the IMC model behaves as the classical MC model in \lo  mode, Theorem \ref{theorem:lowcrit} holds for the IMC model as well.

\subsection{High Criticality Mode}
For \hi mode, the classical MC model discards all \lo jobs after the switch to \hi mode.
In contrast, the IMC model keeps \lo jobs running but with degraded quality, i.e., a shorter execution time.
So the schedulability condition in \cite{Baruah2012} does not work for the IMC model in the \hi  mode.
Thus, we need a new test for the IMC model in \hi mode.

\begin{figure}
 \centering
 \includegraphics[width=0.8\columnwidth]{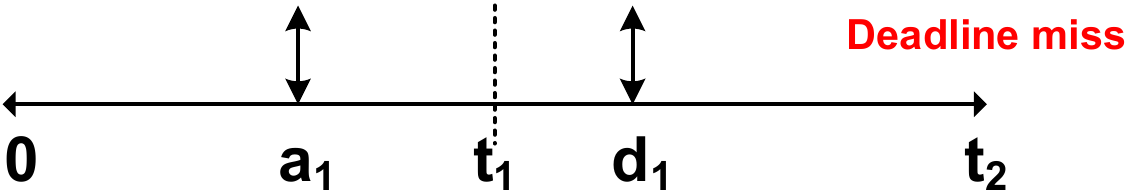}
 \caption{The defined time instants}
 \label{figure:definition}
\end{figure}

To derive the sufficient test in \hi mode, suppose that there is a time interval $[0, t_2]$,
where a first deadline miss occurs at $t_2$ and $t_1$ denotes the time instant of the switch to \hi mode in the time interval, where $t_1 < t_2$.
Assume that $\mathcal{J}$ 
is the minimal set of jobs generated from task set $\gamma$ which leads to the first deadline miss at $t_2$.
The minimality of $\mathcal{J}$ means that removing any job in $\mathcal{J}$ guarantees the schedulability of the rest of $\mathcal{J}$.
Here, we introduce some notations for our later interpretation.
Let variable $\eta_i$ denote the cumulative execution time of task $\tau_i$ in the interval $[0,t_2]$.
$J_1$ denotes a special \hi job which has switch time instant $t_1$ within its period $(a_1, d_1)$, i.e, $a_1 < t_1 < d_1$.
Furthermore, $J_1$ is the job with the earliest release time amongst all \hi jobs in $\mathcal{J}$ which execute in $[t_1,t_2)$. Figure \ref{figure:definition} visualizes the defined time instants.
Moreover, we define a special type of job for \lo tasks which is useful for our later proofs.
\begin{definition}
 A job $J_i$ from \lo task $\tau_i$ is a \textsf{carry-over job}, if its absolute release time $a_i$ is before and its absolute deadline $d_i$ is 
 after the switch time instant, i.e., $a_i < t_1 < d_i$.
\end{definition}

With the notations introduced above, we have the following propositions,
\begin{proposition}[Fact 1 from \cite{Baruah2012}]
 \label{proposition:t1_t2}
 All jobs in $\mathcal{J}$ that execute in $[t_1,t_2)$ have deadline $\le t_2$.
\end{proposition}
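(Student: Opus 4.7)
The plan is a proof by contradiction that leverages the minimality of $\mathcal{J}$: suppose that some job $J \in \mathcal{J}$ executes at a moment $s \in [t_1, t_2)$ even though its original absolute deadline satisfies $d_J > t_2$. I would derive a contradiction by showing that $\mathcal{J} \setminus \{J\}$ still induces a deadline miss at $t_2$ under EDF-VD, contradicting the hypothesis that $\mathcal{J}$ is a minimal such set.

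The central structural observation is that throughout $[t_1, t_2)$ the system is in \hi mode, so EDF-VD reverts to EDF on the \emph{original} deadlines. Hence at every instant $s \in [t_1, t_2)$ when $J$ is dispatched, every ready job must have original deadline at least $d_J > t_2$; in particular, no job in the set $A = \{J' \in \mathcal{J} : d_{J'} \le t_2\}$ can be ready at $s$. Let $J^{\star} \in A$ denote the job that actually misses its deadline at $t_2$. I would then compare, by stepping forward in time, the EDF-VD schedules of $\mathcal{J}$ and of $\mathcal{J} \setminus \{J\}$: inside $[t_1, t_2)$ the slots previously occupied by $J$ are either left idle or reassigned to jobs whose original deadline exceeds $t_2$, and such a reassignment cannot preempt or advance any member of $A$. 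Consequently, the cumulative CPU time delivered to jobs in $A$ inside $[t_1, t_2)$ is identical in the two schedules, so $J^{\star}$ still misses.

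The step I expect to be the main obstacle is extending this bookkeeping to the \lo-mode prefix $[0, t_1)$, where EDF-VD uses \emph{virtual} deadlines for \hi tasks. If $J$ is a \hi job released before $t_1$, its virtual deadline $\hat{d}_J$ could be strictly smaller than the original deadlines of some members of $A$, so at first glance deleting $J$ might allow an $A$-job to execute earlier and thereby erase the miss at $t_2$. I plan to resolve this by restricting the comparison to the maximal processor-busy interval $[t^{\star}, t_2)$ ending at $t_2$ (with $t^{\star}$ the last idle instant, or $t^{\star}=0$); a standard EDF demand-bound argument then shows that the miss at $t_2$ is produced by the total demand of jobs with original deadline $\le t_2$ released inside $[t^{\star}, t_2)$, and since $d_J > t_2$ the removal of $J$ does not reduce this demand. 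Combined with the clean argument for $[t_1, t_2)$, this recovers the contradiction with minimality and establishes the proposition.
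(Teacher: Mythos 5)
Your central observation is the right one, and it is exactly what the paper's (very terse) justification rests on: during $[t_1,t_2)$ EDF-VD dispatches by \emph{original} deadlines, so at any instant $s$ where a job $J$ with $d_J>t_2$ receives the processor, every pending job has deadline $>t_2$, hence every job of $\mathcal{J}$ with deadline $\le t_2$ released by time $s$ has already completed. Up to there you and the paper agree. The divergence is in how this is turned into a contradiction with the minimality of $\mathcal{J}$, and your route has a genuine gap.

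The gap is precisely the \lo mode prefix you flag as ``the main obstacle,'' because the patch you propose does not close it. Deleting the single job $J$ and comparing schedules is only clean when $J$ receives no processor time before $t_1$; if $J$ is, say, a \hi job that also ran in $[0,t_1)$, its \lo mode interference may be essential to the miss, and removing it can make $\mathcal{J}\setminus\{J\}$ schedulable. Your busy-interval/demand-bound repair assumes that a miss at $t_2$ forces the demand of jobs with \emph{original} deadline $\le t_2$ released in $[t^{\star},t_2)$ to exceed $t_2-t^{\star}$, but that implication is a property of plain EDF and fails for EDF-VD: in \lo mode the dispatcher orders jobs by \emph{virtual} deadlines, so inside the busy window the processor may legitimately serve a \hi job whose virtual deadline is small but whose original deadline exceeds $t_2$ while jobs with original deadline $\le t_2$ are still pending. (Bounding exactly this mixed demand is the whole point of Lemma \ref{lemma:low_crit} and Proposition \ref{proposition:hi_crit}, so one cannot invoke it here.) The standard repair --- and the one this paper itself uses in the proofs of Proposition \ref{proposition:carry} and Theorem \ref{theorem:high_crit} --- is a \emph{suffix} argument rather than a single-job deletion: pick an instant $s\in[t_1,t_2)$ at which $J$ executes (strictly after its release). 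Since no job with deadline $\le t_2$ is pending at $s$, all deadline-$\le t_2$ work performed in $[s,t_2]$ belongs to jobs released at or after $s$; therefore the jobs of $\mathcal{J}$ released at or after $s$ form a strictly smaller job set (it excludes $J$ and, e.g., $J_1$) that still produces the miss at $t_2$, contradicting minimality. This discards the entire prefix at once and never has to reason about the \lo mode schedule, which is exactly why it succeeds where per-job deletion does not.
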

It is easy to observe that only jobs which have deadlines $\le t_2$ are possible to cause a deadline miss at $t_2$.
If a job has its deadline $>t_2$ and is still in set $\mathcal{J}$, it will contradict the minimality of $\mathcal{J}$.

\begin{proposition}
 \label{proposition:case_1_2}
The switch time instant $t_1$ has
 \begin{equation}
\label{equation:case_1_2}
 t_1 < (a_1+x(t_2-a_1))
 \end{equation}
\end{proposition}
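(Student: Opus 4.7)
My plan is to exploit the virtual-deadline mechanism of EDF-VD together with the LO-mode schedulability guarantee of Theorem~\ref{theorem:lowcrit}, and then compose the resulting inequality with Proposition~\ref{proposition:t1_t2}. Recall that $J_1$ is a \hi job with $a_1 < t_1 < d_1$ that still has work in $[t_1, t_2)$; under EDF-VD, while the system is in \lo mode, $J_1$ is scheduled against its virtual deadline $\hat{d}_1 = a_1 + x(d_1 - a_1)$, using $D_1 = T_1$.

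First I would argue by contradiction that $\hat{d}_1 > t_1$. On the prefix $[0, t_1]$ the system behaves as a pure EDF schedule of the virtual-deadline task set, so by Theorem~\ref{theorem:lowcrit} every job whose virtual deadline falls inside $[0, t_1]$ has already completed by that virtual deadline. If we had $\hat{d}_1 \le t_1$, then $J_1$ would be finished by $\hat{d}_1 \le t_1$, contradicting the assumption that $J_1$ executes somewhere in $[t_1, t_2)$. Hence $t_1 < \hat{d}_1 = a_1 + x(d_1 - a_1)$.

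Second, I would invoke Proposition~\ref{proposition:t1_t2}: since $J_1$ executes in $[t_1, t_2)$, its (true) deadline satisfies $d_1 \le t_2$. Because $x > 0$, monotonicity gives $a_1 + x(d_1 - a_1) \le a_1 + x(t_2 - a_1)$. Chaining this with the previous strict inequality yields $t_1 < a_1 + x(t_2 - a_1)$, which is exactly \eqref{equation:case_1_2}.

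The only delicate step I expect is the first one: one has to be comfortable asserting that the \lo-mode portion of an EDF-VD execution, viewed in isolation on the prefix $[0, t_1]$, is literally a standard EDF schedule on virtual deadlines, so EDF's classical completion property applies at every instant inside the prefix — not merely as an end-of-hyperperiod statement. Once this observation is granted, the rest of the argument is a one-line substitution of $d_1$ by the bound from Proposition~\ref{proposition:t1_t2} and an appeal to $x \in (0,1)$.
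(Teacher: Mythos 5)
Your proof is correct and follows essentially the same route as the paper's: both argue that since $J_1$ executes in $[t_1,t_2)$ it cannot have met its virtual deadline $a_1+x(d_1-a_1)$ before $t_1$, giving $t_1 < a_1+x(d_1-a_1)$, and then substitute $d_1 \le t_2$ from Proposition~\ref{proposition:t1_t2}. You merely make explicit (via the LO-mode guarantee of Theorem~\ref{theorem:lowcrit}) the completion property that the paper invokes implicitly with ``otherwise, it should have completed its low-criticality execution before $t_1$.''
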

\begin{proof}
 Let us consider a time instant $(a_1+ x(d_1 - a_1))$ which is the virtual deadline of job $J_1$.
Since $J_1$ executes in time interval $[t_1,t_2)$, its virtual deadline $(a_1 + x(d_1 - a_1))$ must be greater than the switch time instant $t_1$.
Otherwise, it should have completed its \lo execution before $t_1$,
and this contradicts that it executes in $[t_1,t_2)$.
Thus, it holds that
\[\begin{split}
& t_1 < (a_1 + x(d_1 - a_1)) \\
\Rightarrow & t_1 < (a_1+x(t_2-a_1)) \quad (\text{since }d_1 \le t_2)
  \end{split}
\]
\end{proof}

\begin{proposition}
\label{proposition:carry}
 If a \textsf{carry-over} job $J_i$ has its cumulative execution equal to $(d_i-a_i)u_i^{LO}$ and  $u_i^{LO} >u_i^{HI}$, its deadline $d_i$ is $\le (a_1 + x(t_2 -a_1))$.
\end{proposition}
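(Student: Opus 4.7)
The plan is to argue by contradiction: assume $d_i > a_1 + x(t_2 - a_1)$, and combine (i) the IMC execution semantics, which force $J_i$ to complete entirely within LO mode, with (ii) the EDF-VD priority comparison between $J_1$ and $J_i$, to derive a contradiction.

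First I would extract the completion-time consequence of the hypothesis. Since $u_i^{LO} > u_i^{HI}$, task $\tau_i$ has $C_i^{LO} > C_i^{HI}$. By the IMC semantics for carry-over LO jobs, after the switch at $t_1$ the total execution of such a job is capped at $C_i^{HI}$ (either it already exceeded $C_i^{HI}$ at $t_1$ and is then suspended, or it is allowed to continue only up to $C_i^{HI}$ in total). Since $J_i$'s cumulative execution equals $(d_i - a_i) u_i^{LO} = C_i^{LO} > C_i^{HI}$, $J_i$ cannot rely on any post-switch work; it must have completed its full $C_i^{LO}$ inside LO mode, finishing at some $f_i \le t_1$.

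Next I would compare priorities under EDF-VD. The virtual deadline of $J_1$ in LO mode is $\hat{d_1} = a_1 + x(d_1 - a_1)$. By Proposition \ref{proposition:t1_t2} we have $d_1 \le t_2$, hence $\hat{d_1} \le a_1 + x(t_2 - a_1) < d_i$ under the contradiction hypothesis. Because $\tau_i$ is LO-critical and thus uses its original deadline $d_i$ under EDF-VD, $J_1$ has strictly higher EDF-VD priority than $J_i$ throughout LO mode. Now derive the contradiction at the completion instant: since $J_1$ executes in $[t_1, t_2)$, it has not completed its LO budget by $t_1$, so $J_1$ is pending (released and unfinished) throughout $[a_1, t_1]$. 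At $f_i$, $J_i$ is running to completion, which under EDF-VD requires $J_i$ to have the highest priority among pending jobs; provided $f_i \ge a_1$, so that $J_1$ is pending at $f_i$, this directly contradicts $\hat{d_1} < d_i$ and forces $d_i \le a_1 + x(t_2 - a_1)$.

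The delicate step will be the corner case $f_i < a_1$, where $J_i$ finishes before $J_1$ is even released and the clean priority-at-$f_i$ argument no longer bites. I would dispose of it by appealing to the carry-over requirement $d_i > t_1$ together with the minimality of $\mathcal{J}$ and the defining property that $J_1$ is the earliest-released HI job executing in $[t_1, t_2)$: a LO job that completes entirely before $a_1$ only affects the schedule in $[a_i, a_1]$ and so cannot be essential to the miss at $t_2$, which by Proposition \ref{proposition:t1_t2} is driven by jobs whose deadlines lie in $[t_1, t_2]$; removing $J_i$ therefore preserves the miss, contradicting minimality. Formalizing this reduction (in particular, arguing that the extra slack freed in $[a_i, a_1]$ cannot be exploited by any job that contributes to the miss without itself having a smaller deadline than $\hat{d_1}$) is the principal technical obstacle I expect to face.
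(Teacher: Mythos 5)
Your first step (the job must finish its full $C_i^{LO}$ strictly in \textsc{lo} mode because after $t_1$ its budget is capped at $C_i^{HI} < C_i^{LO}$) is exactly the paper's opening move. The divergence, and the gap, is in what follows. Your priority comparison against $J_1$ only establishes that $J_i$ cannot execute anywhere in $[a_1, t_1]$ (since $J_1$ is pending there with virtual deadline $\hat{d_1} \le a_1 + x(t_2-a_1) < d_i$). Under the contradiction hypothesis this forces $f_i \le a_1$, so what you call the ``corner case'' is in fact the \emph{only} remaining case --- the entire burden of the proof lands on it, and your argument there does not go through. The claim that a job completing before $a_1$ ``only affects the schedule in $[a_i,a_1]$ and so cannot be essential to the miss at $t_2$'' is false in general: the processor time $J_i$ consumes before $a_1$ delays other jobs of $\mathcal{J}$ that are pending in that window and whose deadlines are at most $t_2$, and that delay can cascade to the miss. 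Indeed, minimality of $\mathcal{J}$ says every job in $\mathcal{J}$, including ones finishing before $a_1$, \emph{is} essential, so you cannot simply declare $J_i$ removable; you correctly flag this as an obstacle, but it is not a technicality --- it is the missing idea.

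The paper closes this case differently: instead of reasoning at $J_i$'s completion instant against $J_1$, it reasons at $t^*$, the \emph{latest} instant at which $J_i$ begins a segment of execution (necessarily before $t_1$). At $t^*$, EDF-VD has selected $J_i$, whose deadline exceeds $a_1 + x(t_2-a_1)$ by hypothesis; hence every released job of $\mathcal{J}$ with effective deadline at most $a_1 + x(t_2-a_1)$ has already completed by $t^*$ and contributes no execution to $[t^*, t_2]$. Consequently the jobs of $\mathcal{J}$ released at or after $t^*$ already suffice to produce the deadline miss at $t_2$, contradicting the minimality of $\mathcal{J}$. The crucial difference from your sketch is that the paper does not try to delete $J_i$ while keeping the prefix of the schedule; it truncates the whole instance at $t^*$, where EDF-VD's scheduling decision certifies that nothing essential is lost. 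To repair your proof you would need to import exactly this $t^*$-truncation argument, at which point you would have reproduced the paper's proof rather than found an alternative one.
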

\begin{proof}
For a \textsf{carry-over} job $J_i$, if it has its cumulative execution equal to $(d_i-a_i)u_i^{LO}$ and $u_i^{LO} > u_i^{HI}$, it should complete its $C_i^{LO}$ execution before $t_1$.
Otherwise, if job $J_i$ has executed time units $C_i\in [C_i^{HI},C_i^{LO})$ at time instant $t_1$, it will be suspended and will not execute after $t_1$.

Now, we will show that when job $J_i$ completes its $C_i^{LO}$ execution, its deadline is $d_i \le (a_1 + x(t_2 -a_1))$.
We prove this by contradiction.
 First, we suppose that $J_i$ has its deadline $d_i > (a_1+x(t_2 -a_1))$  and release time $a_i$.
As shown above, job $J_i$ completes its $C_i^{LO}$ execution before $t_1$.
 Let us assume a time instant $t^*$ as the latest time instant at which this \textsf{carry-over job} $J_i$ starts to execute before $t_1$.
 This means that at this time instant all jobs in $\mathcal{J}$ with deadline $\le (a_1+x(t_2 -a_1))$ have finished their executions.
This indicates that these jobs will not have any execution within interval $[t^*,t_2]$.
Therefore, jobs in $\mathcal{J}$ with release time at or after time instant $t^*$ can form a smaller job set which causes a deadline miss at $t_2$.
Then, it contradicts the minimality of $\mathcal{J}$.
Thus, \textsf{carry-over} job $J_i$ with its cumulative execution time equal to $(d_i - a_i)u_i^{LO}$ and $u_i^{LO}>u_i^{HI}$ has its deadline $d_i \le (a_1 +x(t_2-a_1))$.
\end{proof}

With the propositions and notations given above, we derive an upper bound of the cumulative execution time $\eta_i$ of
\lo task $\tau_i$.
\begin{lemma}
 \label{lemma:low_crit}
 For any \lo task $\tau_i$, it has
 \begin{equation}
  \label{equation:lowbound}
  \eta_i \le (a_1 + x(t_2 - a_1))u_i^{LO} + (1 - x)(t_2 - a_1)u_i^{HI}
 \end{equation}
\end{lemma}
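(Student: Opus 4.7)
The plan is to upper-bound $\eta_i$ by splitting its execution according to the mode switch and then to ``stretch'' the \lo-mode interval from $[0, t_1]$ to $[0, y]$, where $y := a_1 + x(t_2 - a_1)$. The stretch is valid because $u_i^{LO} \ge u_i^{HI}$ holds for every \lo task by the IMC model definition, and $t_1 < y$ by Proposition \ref{proposition:case_1_2}.

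First I establish the intermediate inequality
\[ \eta_i \le t_1 \cdot u_i^{LO} + (t_2 - t_1) \cdot u_i^{HI}. \]
To do so I account for each job of $\tau_i$ that contributes to $\eta_i$. By Proposition \ref{proposition:t1_t2}, every such job has deadline at most $t_2$. Jobs with deadline $\le t_1$ execute entirely in \lo mode with WCET $C_i^{LO}$, and the sporadic density bound for implicit-deadline tasks limits their total contribution to at most $t_1 \cdot u_i^{LO}$. Jobs released at or after $t_1$ execute only in \hi mode with WCET $C_i^{HI}$, contributing at most $(t_2 - t_1) \cdot u_i^{HI}$ in total. The at-most-one \textsf{carry-over} job of $\tau_i$ is the delicate case, handled via Proposition \ref{proposition:carry} and the IMC execution semantics described in Section \ref{background:schematic}.

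Second I rewrite the intermediate inequality as
\[ t_1 u_i^{LO} + (t_2 - t_1) u_i^{HI} = y\, u_i^{LO} + (t_2 - y)\, u_i^{HI} - (y - t_1)(u_i^{LO} - u_i^{HI}) \]
and conclude with the claimed bound by observing that $(y - t_1)(u_i^{LO} - u_i^{HI}) \ge 0$: the first factor is nonnegative by Proposition \ref{proposition:case_1_2}, the second by the \lo task property. Substituting back $y = a_1 + x(t_2 - a_1)$ and $t_2 - y = (1-x)(t_2 - a_1)$ recovers the exact form in the statement.

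The main obstacle will be the \textsf{carry-over} job: its execution straddles $t_1$, and depending on how much it has executed before $t_1$ relative to $C_i^{HI}$ and $C_i^{LO}$, its total cumulative execution can lie anywhere in $[0, C_i^{LO}]$. Proposition \ref{proposition:carry} is precisely engineered to force any \textsf{carry-over} that could execute more than $C_i^{HI}$ time units into the ``deadline $\le y$'' group, so that the \lo-rate and \hi-rate density bounds remain consistent without leaving an additive slack of one job's worth of WCET. The careful bookkeeping of this case is the only step that is not routine.
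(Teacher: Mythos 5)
Your overall strategy---bound $\eta_i$ by a density split at some instant and then ``stretch'' the split point up to $y = a_1 + x(t_2-a_1)$ using the monotonicity of $z \mapsto z\,u_i^{LO} + (t_2-z)\,u_i^{HI}$ (valid since $u_i^{LO}\ge u_i^{HI}$)---is the same mechanism the paper uses. The gap is that your intermediate inequality $\eta_i \le t_1 u_i^{LO} + (t_2-t_1)u_i^{HI}$ is false in precisely the delicate case you flag. A \textsf{carry-over} job has, by definition, $a_i < t_1 < d_i$, and in the worst case it completes its full $C_i^{LO}$ before $t_1$ (the IMC semantics permit this), so that
\[
\eta_i \le a_i u_i^{LO} + (d_i-a_i)u_i^{LO} + (t_2-d_i)u_i^{HI} = d_i u_i^{LO} + (t_2-d_i)u_i^{HI},
\]
which, because $d_i > t_1$ and $u_i^{LO} > u_i^{HI}$, strictly exceeds $t_1 u_i^{LO} + (t_2-t_1)u_i^{HI}$; this larger quantity is exactly the paper's worst-case accounting for this sub-case, so it cannot be folded under your $t_1$-split. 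Proposition \ref{proposition:carry} does not rescue the $t_1$-split: it only forces $d_i \le a_1 + x(t_2-a_1)$, a point that lies strictly to the right of $t_1$ by Proposition \ref{proposition:case_1_2}, so the ``deadline $\le y$'' group you invoke in your last paragraph is not the ``deadline $\le t_1$'' group your intermediate inequality actually uses. The repair is exactly the paper's Sub-case 2: take the split point to be $d_i$ itself rather than $t_1$, and stretch from $d_i$ to $y$ via Proposition \ref{proposition:carry}. The $t_1$-split is sound only when no carry-over job exists (the paper's Sub-case 1); the paper also treats separately the case where $\tau_i$ releases no job in $(a_1,t_2]$ (its Case B), which your write-up does not address but which follows the same $d_i$-based pattern.
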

\begin{proof}
If $u_i^{LO}=u_i^{HI}$, it is trivial to see that Lemma \ref{lemma:low_crit} holds.
Below we focus on the case when $u_i^{LO}>u_i^{HI}$.
If a system switches to \hi mode at $t_1$, then we know that \lo tasks are scheduled using $C_i^{LO}$ before $t_1$ and
using $C_i^{HI}$ after $t_1$.
To prove this lemma, we need to consider two cases, where $\tau_i$ releases a job within interval $(a_1,t_2]$ or it does not.
We prove the two cases separately.

\textbf{Case A} (task $\tau_i$ releases a job within interval $(a_1,t_2]$):
There are two sub-cases to be considered.

\begin{itemize}
 \item \noindent{\bf Sub-case 1 (No carry-over job):} The deadline of a job of \lo task $\tau_i$ coincides with switch time instant $t_1$.
The cumulative execution time of \lo task $\tau_i$ within time interval $[0,t_2]$ can be bounded as follows,
\[\eta_i \le (t_1 -0) \cdot u_i^{LO} + (t_2 - t_1) \cdot u_i^{HI} \]
Since $t_1 < (a_1 + x(t_2 - a_1))$ according to Proposition \ref{proposition:case_1_2} and for \lo task $\tau_i$ it has $u_i^{LO} > u_i^{HI}$, then
\[\small
 \begin{split}
  &\eta_i < \big(a_1 + x(t_2 - a_1)\big)u_i^{LO} + \big(t_2 - \big(a_1 + x(t_2 - a_1)\big)\big)u_i^{HI} \\
  \Leftrightarrow & \eta_i < (a_1 + x(t_2 - a_1))u_i^{LO} + (1 - x)(t_2 - a_1)u_i^{HI}
  \end{split}
\]

\item\noindent{\bf Sub-case 2 (with carry-over job):} In this case, before the \textsf{carry-over} job, jobs of $\tau_i$ are scheduled with its $C_i^{LO}$.
After the \textsf{carry-over} job, jobs of $\tau_i$ are scheduled with its $C_i^{HI}$.
It is trivial to observe that for a \textsf{carry-over} job its maximum cumulative execution time can be obtained when it completes its $C_i^{LO}$ within its period $[a_i,d_i]$,
i.e., $(d_i-a_i)u_i^{LO}$.
Considering the maximum cumulative execution for the \textsf{carry-over job}, we then have for \lo task $\tau_i$,
\[
\begin{split}
 &\eta_i  \le (a_i-0) u_i^{LO} + (d_i - a_i) u_i^{LO} + (t_2 - d_i) u_i^{HI}  \\
 \Leftrightarrow &\eta_i  \le d_i u_i^{LO} + (t_2 - d_i) u_i^{HI} \\
\end{split}
\]
Proposition \ref{proposition:carry} shows as $J_i$ has its cumulative execution equal to $(d_i-a_i)\cdot u_i^{LO}$, it has $d_i \le (a_1 +x(t_2-a_1))$.
Given $u_i^{LO}> u_i^{HI}$ for \lo task,  we have
\[
\small
\begin{split}
 &\eta_i  \le d_i u_i^{LO} + (t_2 - d_i) u_i^{HI} \\
\Rightarrow & \eta_i \le \big(a_1 + x(t_2 - a_1)\big)u_i^{LO} + \big(t_2 - \big(a_1 + x(t_2 - a_1)\big)\big)u_i^{HI} \\
 \Leftrightarrow &\eta_i  \le (a_1 + x(t_2 - a_1))u_i^{LO} + (1 - x)(t_2 - a_1)u_i^{HI}
\end{split}
\]

\end{itemize}
\textbf{Case B} (task $\tau_i$ does not release a job within interval $(a_1,t_2]$): In this case,
let $J_i$ denote the last release job of task $\tau_i$ before $a_1$ and $a_i$ and $d_i$ are its absolute release time and absolute deadline, respectively.
If $d_i \le t_1$, we have
\[\eta_i =(a_i-0)u_i^{LO} + (d_i-a_i)\cdot u_i^{LO} = d_iu_i^{LO}\]
If $d_i > t_1$, $J_i$ is a \textsf{carry-over job}. As we discussed above, the maximum cumulative execution time of \textsf{carry-over job} $J_i$ is $(d_i-a_i)u_i^{LO}$,
so we have
\[
\label{eq:imc_1}
 \eta_i \le (a_i-0)u_i^{LO} + (d_i-a_i)\cdot u_i^{LO}\Leftrightarrow \eta_i \le d_iu_i^{LO}
\]
Similarly, according to Proposition \ref{proposition:carry}, we obtain,
\[
\small
 \begin{split}
  & \eta_i \le d_i \cdot u_i^{LO} \le (a_1 +x(t_2-a_1))u_i^{LO} \\
\Rightarrow & \eta_i < (a_1 +x(t_2-a_1))u_i^{LO} + \big(t_2 - \big(a_1 + x(t_2 - a_1)\big)\big)u_i^{HI}  \\
 \Leftrightarrow & \eta_i < (a_1 + x(t_2 - a_1))u_i^{LO} + (1 - x)(t_2 - a_1)u_i^{HI} \\
 \end{split}
\]

\end{proof}

 Lemma \ref{lemma:low_crit} gives the upper bound of the cumulative execution time of a \lo task in \hi mode.
In order to derive the sufficient test for the IMC model in \hi mode, we need to upper bound the cumulative execution time of
\hi tasks.

\begin{proposition}[Fact 3 from \cite{Baruah2012}]
\label{proposition:hi_crit}
 For any \hi task $\tau_i$, it holds that
 \begin{equation}
  \eta_i \le \frac{a_1}{x}u_i^{LO} + (t_2 -a_1)u_i^{HI}
 \end{equation}
\end{proposition}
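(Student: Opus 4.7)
The plan is to split the cumulative execution at the release instant $a_1$ of $J_1$, writing $\eta_i = \eta_i^{[0, a_1]} + \eta_i^{(a_1, t_2]}$, and bound the two pieces independently so that the first matches $(a_1/x)u_i^{LO}$ and the second matches $(t_2 - a_1)u_i^{HI}$. Adding the two bounds then yields the proposition.

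For the first piece, I would exploit the fact that $a_1 < t_1$ places the whole interval in LO mode, where EDF-VD schedules every job of $\tau_i$ released at $a$ under the virtual deadline $a + xT_i$ rather than the true deadline $a + T_i$. From the scheduler's point of view, $\tau_i$ then behaves like a task with density $C_i^{LO}/(xT_i) = u_i^{LO}/x$, and by LO-mode feasibility of EDF-VD (Theorem \ref{theorem:lowcrit}) its cumulative processor share on $[0, a_1]$ is at most $a_1 \cdot u_i^{LO}/x$. Concretely, this is an EDF demand-bound argument: every job of $\tau_i$ whose virtual deadline lies in $[0, a_1]$ must have completed by $a_1$, and any partially executed job with virtual deadline in $(a_1, a_1 + xT_i]$ adds at most $C_i^{LO}$; the resulting sum is bounded by $(a_1/x)u_i^{LO}$.

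For the second piece, I would combine the earliest-release property of $J_1$ with Proposition \ref{proposition:t1_t2}. The former forbids any HI job of $\tau_i$ released strictly before $a_1$ from executing in $[t_1, t_2)$, so each such job either completes by $t_1$ or contributes only to the first piece (any execution it performs in $(a_1, t_1)$ is already captured by the LO-mode density bound). Jobs of $\tau_i$ released inside $(a_1, t_2]$ have actual deadlines at most $t_2$ by Proposition \ref{proposition:t1_t2}, hence there are at most $(t_2 - a_1)/T_i$ of them, and each contributes at most $C_i^{HI}$ of execution (noting $C_i^{LO} \le C_i^{HI}$ for HI tasks and that, after the mode switch, $\tau_i$ is scheduled with $C_i^{HI}$). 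Summing yields $(t_2 - a_1)u_i^{HI}$.

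The main obstacle will be pinning down the LO-mode bound $(a_1/x)u_i^{LO}$ rigorously. A naive release-counting estimate of the form $\lceil a_1/T_i \rceil \cdot C_i^{LO}$ does not match this expression in general (and in particular can exceed it when $a_1 < xT_i/(1-x)$), so one must really lean on the density reinterpretation: viewing $\tau_i$ as a constrained-deadline task with deadline $xT_i$ and invoking the standard EDF utilization/demand-bound estimate on the feasible LO-mode schedule, with a separate small-$a_1$ case. A secondary subtlety is the bookkeeping at the boundary $a_1$, where execution of jobs straddling $a_1$ must be neither missed nor double-counted between the two pieces.
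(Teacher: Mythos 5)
First, a framing point: the paper does not prove this proposition at all. It is quoted verbatim as Fact~3 of \cite{Baruah2012} and justified in one sentence by the observation that \hi tasks are scheduled identically in the IMC and classical MC models, so there is no in-paper proof to match your attempt against. Judged on its own merits, your argument has a genuine flaw: the time-interval decomposition $\eta_i=\eta_i^{[0,a_1]}+\eta_i^{(a_1,t_2]}$ with the per-piece bound $\eta_i^{[0,a_1]}\le (a_1/x)u_i^{LO}$ is not provable, because that per-piece bound is false for legitimate EDF-VD schedules. Density and demand-bound arguments bound the \emph{demand} of jobs whose (virtual) deadlines fall inside an interval; they do not bound the \emph{work actually executed} during a prefix $[0,a_1]$. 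A \hi task released at time $0$ with $C_i^{LO}>a_1$ and nothing else ready can occupy the processor throughout $[0,a_1]$, giving $\eta_i^{[0,a_1]}=a_1$, which exceeds $(a_1/x)u_i^{LO}$ whenever $u_i^{LO}<x$. The ``separate small-$a_1$ case'' you propose cannot repair this, because the failure is not an artifact of release counting near $t=0$; it is the interval-wise reading of the bound that is wrong. A second, related leak: execution performed in $(a_1,t_1)$ by jobs of $\tau_i$ released before $a_1$ falls in your second piece but is covered by neither bound (you assign it to ``the first piece,'' contradicting your own split), while your count of $(t_2-a_1)/T_i$ jobs covers only jobs released after $a_1$.

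The fix is to abandon the split by time and split by \emph{job}, which is how \cite{Baruah2012} argues. Let $a_i\ge a_1$ be the release time of the earliest job of $\tau_i$ that executes in $[t_1,t_2)$ (if no such job exists the bound is immediate). Every job of $\tau_i$ released before $a_i$ does not execute in $[t_1,t_2)$, hence runs entirely in \lo mode and contributes at most $C_i^{LO}$ \emph{in total}, wherever in $[0,t_1)$ that execution happens to be placed; since $a_i$ is itself a release of $\tau_i$, there are at most $a_i/T_i$ such jobs, giving at most $a_iu_i^{LO}$. Jobs released in $[a_i,t_2]$ have deadlines at most $t_2$ by Proposition~\ref{proposition:t1_t2}, so there are at most $(t_2-a_i)/T_i$ of them, each contributing at most $C_i^{HI}$. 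Summing, then using $u_i^{HI}\ge u_i^{LO}$ (so the total is non-increasing in $a_i$ and may be evaluated at $a_1\le a_i$) and $x\le 1$, yields $\eta_i\le a_1u_i^{LO}+(t_2-a_1)u_i^{HI}\le \frac{a_1}{x}u_i^{LO}+(t_2-a_1)u_i^{HI}$. No density argument and no boundary bookkeeping are needed.
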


Proposition \ref{proposition:hi_crit} is used to bound the cumulative execution of the \hi tasks.
Since in the IMC model the \hi tasks are scheduled as in the classical MC model, Proposition \ref{proposition:hi_crit} holds for the IMC model as well.
With Lemma \ref{lemma:low_crit} and Proposition \ref{proposition:hi_crit}, we can derive the sufficient test for the IMC model in \hi mode.
\begin{theorem}
 \label{theorem:high_crit}
 The following condition is sufficient for ensuring that EDF-VD successfully schedules all tasks in \hi mode:
\begin{equation}
 \label{equation:edfvd_upper}
 x\ull + (1 - x) \ulh + \uhh \le 1
\end{equation}
\end{theorem}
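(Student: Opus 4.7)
The plan is to prove the theorem by contradiction. I would suppose EDF-VD fails in \hi mode, let $\mathcal{J}$ be the minimal offending job set introduced earlier in Section \ref{section:sufficientTest}, and take $t_2$ to be the first deadline miss with $t_1$ the mode-switch instant. Proposition \ref{proposition:t1_t2} ensures that every job in $\mathcal{J}$ executing after $t_1$ has deadline no later than $t_2$, so all demand in $\mathcal{J}$ must be serviced inside $[0,t_2]$; the deadline miss then forces the total cumulative execution to strictly exceed the interval length, i.e.\ $\sum_{\tau_i\in\gamma}\eta_i > t_2$.

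Next I would apply Lemma \ref{lemma:low_crit} to each \lo task and Proposition \ref{proposition:hi_crit} to each \hi task and sum termwise, obtaining
\begin{equation*}
t_2 < \bigl(a_1 + x(t_2-a_1)\bigr)\ull + (1-x)(t_2-a_1)\ulh + \tfrac{a_1}{x}\uhl + (t_2-a_1)\uhh.
\end{equation*}
Writing $t_2 = a_1 + (t_2-a_1)$ on the left and collecting the coefficients of $a_1$ and of $(t_2-a_1)$ on the right, this rearranges into the compact form
\begin{equation*}
0 < a_1\!\left(\ull + \tfrac{\uhl}{x} - 1\right) + (t_2-a_1)\!\left(x\ull + (1-x)\ulh + \uhh - 1\right).
\end{equation*}

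To close the argument I would invoke the \lo-mode condition $x \ge \uhl/(1-\ull)$ from Theorem \ref{theorem:lowcrit}, which any admissible choice of $x$ must already satisfy; an elementary rearrangement shows it is equivalent to $\ull + \uhl/x \le 1$, making the first bracketed factor nonpositive. The hypothesis $x\ull + (1-x)\ulh + \uhh \le 1$ of the theorem in turn makes the second bracketed factor nonpositive. Since $a_1 \ge 0$ and $t_2 - a_1 > 0$, the right-hand side is $\le 0$, contradicting the strict inequality above and yielding the theorem.

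The main obstacle sits not in this rearrangement but inside the already-established Lemma \ref{lemma:low_crit}: bounding $\eta_i$ for a \lo task that owns a \textsf{carry-over} job requires combining Propositions \ref{proposition:case_1_2} and \ref{proposition:carry} to argue that such a job completes its full $C_i^{LO}$ demand by the virtual-deadline horizon $a_1 + x(t_2-a_1)$, which is precisely where the interaction between virtual deadlines and the reduced \hi-mode budget enters. Once that lemma is available, Theorem \ref{theorem:high_crit} reduces to essentially the same algebraic step as in Baruah et al.'s classical EDF-VD analysis, but with the additional $\ulh$ contribution inherited from the imprecise model.
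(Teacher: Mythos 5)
Your proposal is correct and follows essentially the same route as the paper's own proof: sum the bounds from Lemma \ref{lemma:low_crit} and Proposition \ref{proposition:hi_crit}, use the busy-interval/minimality argument to get $\sum_i \eta_i > t_2$, and invoke the \lo-mode condition $\ull + \uhl/x \le 1$ to eliminate the $a_1$ term. The only cosmetic difference is that you phrase it as a direct contradiction with the two coefficients grouped explicitly, whereas the paper derives the negated inequality and takes the contrapositive; the algebra and the supporting facts are identical.
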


\begin{proof}
Let $N$ denote the cumulative execution time of all tasks in $\gamma = \gamma_{LO} \cup \gamma_{HI}$ over interval $[0,t_2]$. We have
\[N = \sum_{\forall \tau_i \in \gamma_{LO}}\eta_i + \sum_{\forall \tau_i \in \gamma_{HI}}\eta_i \]
By using Lemma \ref{lemma:low_crit} and Proposition \ref{proposition:hi_crit}, $N$ is bounded as follows
{\small
  \begin{equation}
 \label{equation:inequation_1}
 \begin{split}
 &N \le \sum_{\forall \tau_i \in \gamma_{LO}}\bigg(\big(a_1 + x(t_2 - a_1)\big)u_i^{LO} + (1 - x)(t_2 - a_1)u_i^{HI}\bigg) \\
 & + \sum_{\forall \tau_i \in \gamma_{HI}}\bigg(\frac{a_1}{x}u_i^{LO} + (t_2 -a_1)u_i^{HI}\bigg) \\
 \Leftrightarrow & N \le  (a_1 + x(t_2 - a_1))\ull + (1 - x)(t_2 - a_1)U^{HI}_{LO} \\
 &+ \frac{a_1}{x}\uhl + (t_2 - a_1)\uhh \\
\Leftrightarrow &N \le a_1(\ull + \frac{\uhl}{x}) + x(t_2 - a_1)\ull  \\
   &  + (1-x)(t_2 - a_1)\ulh + (t_2 - a_1)\uhh
 \end{split}
\end{equation}}

Since the tasks must be schedulable in \lo  mode, the condition given in Theorem \ref{theorem:lowcrit} holds and
we have $ 1 \ge (\ull + \frac{\uhl}{x})$. Hence,
{\small
 \begin{equation}
 \label{equation:inequation_3}
  \begin{split}
 N  \le & a_1 + x(t_2 - a_1)\ull\\
 &   + (1-x)(t_2 - a_1)\ulh + (t_2 - a_1)\uhh
  \end{split}
 \end{equation}}
Since time instant $t_2$ is the first deadline miss, it means that there is no idle time instant within interval $[0,t_2]$.
Note that if there is an idle instant, jobs from set $\mathcal{J}$ which have release time at or after the latest idle instant
can form a smaller job set causing deadline miss at $t_2$ which contradicts the minimality of $\mathcal{J}$.
Then, we obtain
\[\small
 \begin{split}
 & N = \bigg (\sum_{\forall \tau_i \in \gamma_{LO}}\eta_i + \sum_{\forall \tau_i \in \gamma_{HI}}\eta_i \bigg) > t_2 \\
 \Rightarrow & a_1 + x(t_2 - a_1)\ull + (1-x)(t_2 - a_1)\ulh + (t_2 - a_1)\uhh \\ & > t_2 \\
 \Leftrightarrow & x(t_2 - a_1)\ull + (1-x)(t_2 - a_1)\ulh + (t_2 - a_1)\uhh \\ & > t_2 - a_1 \\
 \Leftrightarrow & x\ull + (1-x)\ulh + \uhh > 1
\end{split}
\]
By taking the contrapositive, we derive the sufficient test for the IMC model when it is in \hi mode:
{\small
 \[x\ull + (1-x)\ulh + \uhh \le 1\]}
\end{proof}

Note that if $\ulh = 0$, i.e., no \lo tasks are scheduled after the system switches to \hi mode,
our Theorem \ref{theorem:high_crit} is the same as the sufficient test (Theorem 2 in \cite{Baruah2012}) for the classical MC model in \hi mode.
Hence, our Theorem \ref{theorem:high_crit} actually is a generalized schedulability condition for (I)MC tasks under EDF-VD.

By combining Theorem \ref{theorem:lowcrit} (see Section \ref{lowmode}) and our Theorem \ref{theorem:high_crit}, we prove the following theorem,
\begin{theorem}
\label{theorem:overall}
 Given an IMC task set, if
 {\small
  \begin{equation}
   \label{equation:total}
   \uhh + \ull \le 1
  \end{equation}
then the IMC task set is schedulable by EDF; otherwise, if
 \begin{equation}
  \label{equation:normalcase}
  \frac{\uhl}{1-\ull} \le \frac{1 - (\uhh + \ulh)}{\ull - \ulh}
 \end{equation}}
where
 \begin{equation}
  \label{equation:worstcase}
  \uhh + \ulh < 1 \text{ and } \ull < 1 \text{ and } \ull > \ulh
 \end{equation}
 then this IMC task set can be scheduled by EDF-VD with a deadline scaling factor $x$ arbitrarily chosen in
 the following range
 \[
x \in \left[\frac{\uhl}{1-\ull},~~\frac{1 - (\uhh + \ulh)}{\ull - \ulh} \right]
 \]
\end{theorem}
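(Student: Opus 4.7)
The plan is to combine Theorem~\ref{theorem:lowcrit} (\lo-mode schedulability) with Theorem~\ref{theorem:high_crit} (\hi-mode schedulability), both of which constrain the deadline scaling factor $x$. Together they sandwich $x$ into an interval, and the existence of any $x$ in that interval is what we need for EDF-VD schedulability. The two branches of the statement correspond to whether deadline tightening ($x<1$) is actually required or not.

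First I would dispatch the easy branch \eqref{equation:total}. If $\uhh + \ull \le 1$, I claim EDF-VD with $x = 1$ (i.e., plain EDF, no virtual deadlines) suffices. Checking Theorem~\ref{theorem:lowcrit} with $x = 1$ reduces to $\uhl \le 1 - \ull$; since $C_i^{LO} \le C_i^{HI}$ for every \hi task gives $\uhl \le \uhh$, we get $\uhl + \ull \le \uhh + \ull \le 1$. Checking Theorem~\ref{theorem:high_crit} with $x = 1$ gives exactly $\ull + \uhh \le 1$, which is \eqref{equation:total}. Both conditions hold, so EDF schedules the task set.

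For the second branch I would start from the \hi-mode inequality $x\ull + (1-x)\ulh + \uhh \le 1$ of Theorem~\ref{theorem:high_crit} and regroup it as $x(\ull - \ulh) \le 1 - \ulh - \uhh$. The side conditions in \eqref{equation:worstcase} ensure both $\ull - \ulh > 0$ and $1 - (\uhh + \ulh) > 0$, so I may divide to obtain $x \le \tfrac{1 - (\uhh + \ulh)}{\ull - \ulh}$. Combining this with the \lo-mode lower bound $x \ge \tfrac{\uhl}{1-\ull}$ from Theorem~\ref{theorem:lowcrit} (well defined since $\ull < 1$ by \eqref{equation:worstcase}), any $x$ in the stated interval satisfies both theorems simultaneously. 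The inequality \eqref{equation:normalcase} is exactly the non-emptiness of this interval, so a valid $x$ exists and EDF-VD schedules the system.

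The main obstacle is purely bookkeeping: I must use the side conditions in \eqref{equation:worstcase} carefully to keep the denominators positive, so that dividing preserves inequality directions when isolating $x$. No new combinatorial or analytic argument is needed beyond the algebraic rearrangement, since all the technical content — bounding cumulative executions of \lo and \hi jobs across the mode switch — is already absorbed into Lemma~\ref{lemma:low_crit} and Proposition~\ref{proposition:hi_crit}, on which Theorems~\ref{theorem:lowcrit} and~\ref{theorem:high_crit} rest.
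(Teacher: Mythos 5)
Your proposal is correct and follows essentially the same route as the paper: the second (and substantive) branch is proved identically, by reading Theorem~\ref{theorem:lowcrit} as a lower bound and Theorem~\ref{theorem:high_crit} as an upper bound on $x$ (using the sign conditions in \eqref{equation:worstcase} to divide safely) and observing that \eqref{equation:normalcase} is exactly non-emptiness of the resulting interval. The only cosmetic difference is in the first branch, where the paper invokes the worst-case-reservation argument directly (total utilization $\uhh+\ull\le 1$ plus optimality of EDF on a uniprocessor) rather than instantiating the two mode tests at $x=1$ as you do; both are valid.
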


 \begin{proof}
Total utilization $U \le 1$ is the exact test for EDF on a uniprocessor system. If the condition in (\ref{equation:total}) is met, the given task set is \textit{worst-case reservation} \cite{Baruah2012} schedulable under EDF, i.e., the task set can be scheduled by EDF without deadline scaling for \hi tasks and execution budget reduction for \lo tasks.
Now, we prove the second condition given by (\ref{equation:normalcase}).
 From Theorem \ref{theorem:lowcrit}, we have,
 \[
  \label{equation:x_low}
  x \ge \frac{\uhl}{1 - \ull}
 \]
 From Theorem \ref{theorem:high_crit}, we have
 \[
\begin{split}
& x\ull + (1 - x) \ulh + \uhh \le 1 \\
\Leftrightarrow & x \le \frac{1 - (\uhh + \ulh)}{\ull - \ulh}
\end{split}
\label{equation:proof}
 \]
Therefore, if $\frac{\uhl}{1-\ull} \le \frac{1 - (\uhh + \ulh)}{\ull - \ulh}$, the schedulability conditions of both Theorem \ref{theorem:lowcrit} and \ref{theorem:high_crit} are satisfied.
Thus, the IMC tasks are schedulable under EDF-VD.
 \end{proof}
 

\section{Speedup Factor}
\label{section:speed}

The speedup factor bound is a useful metric to compare the worst-case performance of different MC scheduling algorithms.
The speedup factor bound for the classical MC model under EDF-VD \cite{Baruah2012} has been shown to be $4/3$.
The following is the definition of the speedup factor for an MC scheduling algorithm.

\begin{definition}[from \cite{Baruah2012}]
An algorithm $\mathcal{A}$ has a speedup factor $f \geq 1$, if any task system
that is schedulable on a unit-speed processor by using a hypothetical optimal clairvoyant scheduling algorithm\footnote{A `clairvoyant' scheduling algorithm knows
all run-time information, e.g., when the mode switch will occur, prior to run-time.}, can be successfully scheduled on a speed-$f$ processor by algorithm $\mathcal{A}$.
\end{definition}
%

For notational simplicity, we define
\begin{align*}
\uhh = c, ~~~~&\uhl=\alpha\times c \\
\ull = b, ~~~~&\ulh=\lambda \times b
\end{align*}
where $\alpha \in(0,1]$ and $\lambda \in [0,1]$. $\alpha$ denotes the utilization ratio between $\uhl$ and $\uhh$, while $\lambda$ denotes the utilization ratio between  $\ulh$ and $\ull$.

First, let us analyze the speedup factor of two corner cases. When $\alpha=1$, i.e., $\uhl = \uhh$, this means that there is no mode-switch. Therefore, the task set is scheduled by the traditional EDF, i.e., the task set is schedulable  if $\ull +\uhl \le 1$. Since EDF is the optimal scheduling algorithm on a uniprocessor system,
the speedup factor is 1. When $\lambda=1$, i.e., $\ull=\ulh$, if the task set is schedulable in \hi mode, it must hold $\uhh + \ull \le 1$ by Theorem \ref{theorem:high_crit}.
Then it is scheduled by the traditional EDF and thus the speedup factor is 1 as well.

In this paper, instead of generating a single speedup factor bound,
we derive a speedup factor function with respect to $(\alpha,\lambda)$.
This speedup factor function enables us to quantify the suboptimality of EDF-VD for the IMC model in terms of speedup factor (by our proposed sufficient test) and evaluate the impact of the utilization ratio on the schedulability of an IMC task set under EDF-VD.

First, we strive to find a minimum speed $s$ ($\le$1) for a clairvoyant optimal MC scheduling algorithm such that any implicit-deadline IMC task set
which is schedulable by the clairvoyant optimal MC scheduling algorithm on a speed-$s$ processor can satisfy the schedulability test given in Theorem \ref{theorem:overall}, i.e., schedulable under EDF-VD on a unit-speed processor.

\begin{lemma}
\label{lemma:speedup}
 Given $b,c\in[0,1]$, $\alpha \in (0,1)$, $\lambda \in [0,1)$, and
\begin{equation}
 \max\{b+\alpha c,\lambda b +c\} \le S(\alpha,\lambda)
\end{equation}
where
\[S(\alpha,\lambda) = \frac{(1-\alpha \lambda)((2-\alpha \lambda-\alpha) +(\lambda-1)\sqrt{4\alpha-3\alpha^2})}{2(1-\alpha)(\alpha \lambda-\alpha \lambda^2-\alpha+1)}\]
then it guarantees
\begin{equation}
\label{speedup:condition}
 \frac{\alpha c}{1-b} \le \frac{1-(c+\lambda b)}{b - \lambda b}
\end{equation}
\end{lemma}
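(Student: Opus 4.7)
The plan is to argue by the contrapositive. First, since $\lambda<1$ and we may assume $b\in(0,1)$ (the boundary cases are either vacuous or follow trivially from the hypothesis), the denominators $1-b$ and $b(1-\lambda)$ are strictly positive. Clearing them, the target conclusion $\frac{\alpha c}{1-b}\le\frac{1-(c+\lambda b)}{b-\lambda b}$ is equivalent to the polynomial inequality
\[F(b,c)\;:=\;(1-b)(1-c-\lambda b)-\alpha bc(1-\lambda)\;\ge\;0.\]
So the lemma reduces to showing that the two linear bounds $b+\alpha c\le S(\alpha,\lambda)$ and $\lambda b+c\le S(\alpha,\lambda)$ jointly imply $F(b,c)\ge 0$.

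Equivalently, I would compute
\[S^{*}\;:=\;\inf\bigl\{\,\max\{b+\alpha c,\;\lambda b+c\}\,:\, F(b,c)\le 0,\;b,c\ge 0\bigr\}\]
and verify $S^{*}=S(\alpha,\lambda)$. Two reductions apply. First, $F(0,0)=1>0$, so along any segment from the origin to a point $(b,c)$ with $F(b,c)<0$ the continuous function $F$ vanishes at some scaled point $(t^{*}b,t^{*}c)$ with $t^{*}\in(0,1)$; by positive homogeneity of $\max\{b+\alpha c,\lambda b+c\}$, this scaled point attains a strictly smaller objective value, so the infimum is attained on the curve $F=0$. Second, at the minimiser on $F=0$ the two linear forms must agree: if $b+\alpha c>\lambda b+c$ there, a perturbation along the tangent direction of the smooth curve $F=0$ would decrease $b+\alpha c$ while keeping $\lambda b+c$ below it, contradicting optimality. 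Setting $b+\alpha c=\lambda b+c$ yields the line $c=\tfrac{1-\lambda}{1-\alpha}\,b$.

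Substituting this line into $F(b,c)=0$ and multiplying through by $(1-\alpha)$ produces the quadratic
\[\bigl(1-\alpha(1-\lambda+\lambda^{2})\bigr)\,b^{2}-\bigl(2-\alpha(1+\lambda)\bigr)\,b+(1-\alpha)=0,\]
whose discriminant I expect to simplify to the clean form $(1-\lambda)^{2}(4\alpha-3\alpha^{2})$. The smaller root is
\[b^{*}=\frac{\bigl(2-\alpha(1+\lambda)\bigr)-(1-\lambda)\sqrt{4\alpha-3\alpha^{2}}}{2\bigl(1-\alpha(1-\lambda+\lambda^{2})\bigr)},\]
and along the chosen line one has $b+\alpha c=\tfrac{1-\alpha\lambda}{1-\alpha}\,b$. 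Multiplying $b^{*}$ by $\tfrac{1-\alpha\lambda}{1-\alpha}$ then reproduces exactly the expression for $S(\alpha,\lambda)$ stated in the lemma.

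The main obstacle I anticipate is the algebraic book-keeping: reducing the discriminant to the factored form $(1-\lambda)^{2}(4\alpha-3\alpha^{2})$, selecting the correct (smaller) root, and confirming $b^{*},c^{*}\in(0,1]$ throughout $\alpha\in(0,1),\;\lambda\in[0,1)$ so that the minimiser lies in the admissible region rather than on a boundary of the box. A minor secondary point is rigorously justifying that the perturbation argument rules out corner minimisers where only one of the two linear constraints is active; this reduces to a short KKT/sign-of-multiplier check on the gradients of $F$ and of each linear form.
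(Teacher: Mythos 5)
Your proposal is correct and follows essentially the same route as the paper: recast the conclusion as the polynomial constraint $F(b,c)\ge 0$, minimize $\max\{b+\alpha c,\lambda b+c\}$ over the failure region, argue the minimizer lies on the curve $F=0$ with both linear forms active, and solve the resulting quadratic along the line $c=\frac{1-\lambda}{1-\alpha}b$. The algebra you flag as remaining work (the discriminant factoring as $(1-\lambda)^{2}(4\alpha-3\alpha^{2})$, discarding the root exceeding $1$, and the monotonicity/KKT check forcing both linear constraints to be active) is exactly what the paper carries out in its Appendix I lemmas, and your chosen root together with the scaling factor $\frac{1-\alpha\lambda}{1-\alpha}$ reproduces the stated $S(\alpha,\lambda)$.
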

\begin{proof}
Suppose that $s \ge \max\{b+\alpha c,\lambda b +c\}$. We strive to find a minimal value $s$ to guarantee that (\ref{speedup:condition}) in Theorem \ref{theorem:overall} is always
satisfied. Based on this,
we construct an optimization problem as follows,
 \begin{align}
 \label{speedup:nonconvex_problem_1}
  \text{minimize}\quad & s \\
\text{subject to}\quad & b+\alpha c \le s \label{plane1}\\
& \lambda b +c \le s  \label{plane2}\\
& \lambda b^2 + (\alpha \lambda - \alpha + 1)bc - (\lambda+1)b - c + 1 \le 0 \label{convex:1}\\
& 0 \le b \le 1,\quad 0 \le c \le 1
 \end{align}
where $\alpha$ and $\lambda$ are constant and $s,b,c$ are variables.
If we can prove that $S(\alpha,\lambda)$ is the optimal solution of the optimization problem (\ref{speedup:nonconvex_problem_1}),
then Lemma \ref{lemma:speedup} is proved.

Below, we prove that $S(\alpha,\lambda)$ is
the optimal solution of the optimization problem (\ref{speedup:nonconvex_problem_1})\footnote{
This optimization problem is a non-convex problem and thus we cannot use general convex optimization techniques such as the
Karush-Kuhn-Tucker (KKT) approach \cite{kuhn1951} to solve it. 
}.
%
\begin{figure}[t]
\begin{subfigure}[b]{0.31\columnwidth}
 \centering
\includegraphics[width=1\columnwidth]{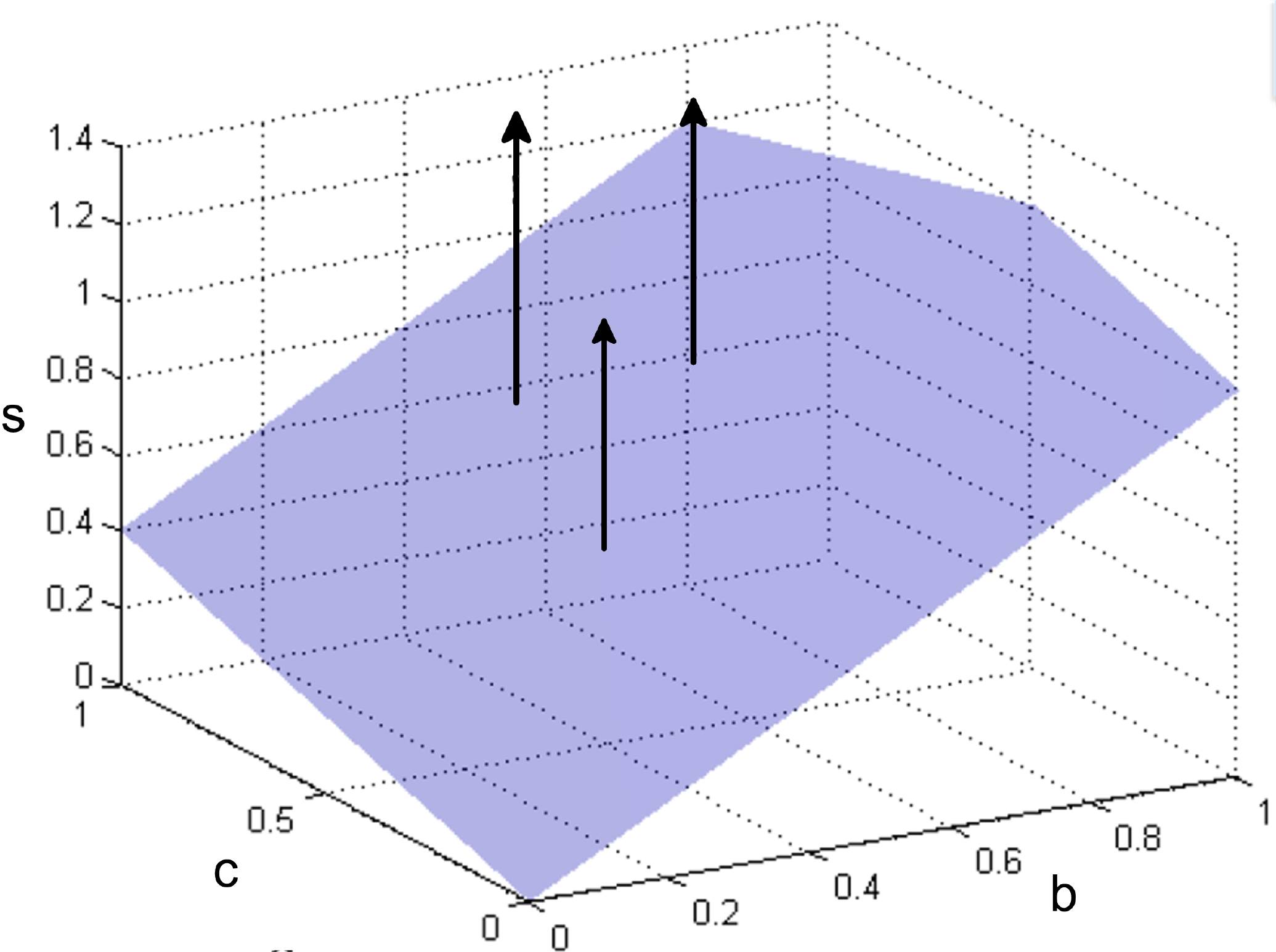}
\caption{plane 1}
\label{fig:plane1}
\end{subfigure}
\begin{subfigure}[b]{0.31\columnwidth}
 \centering
\includegraphics[width=1\columnwidth]{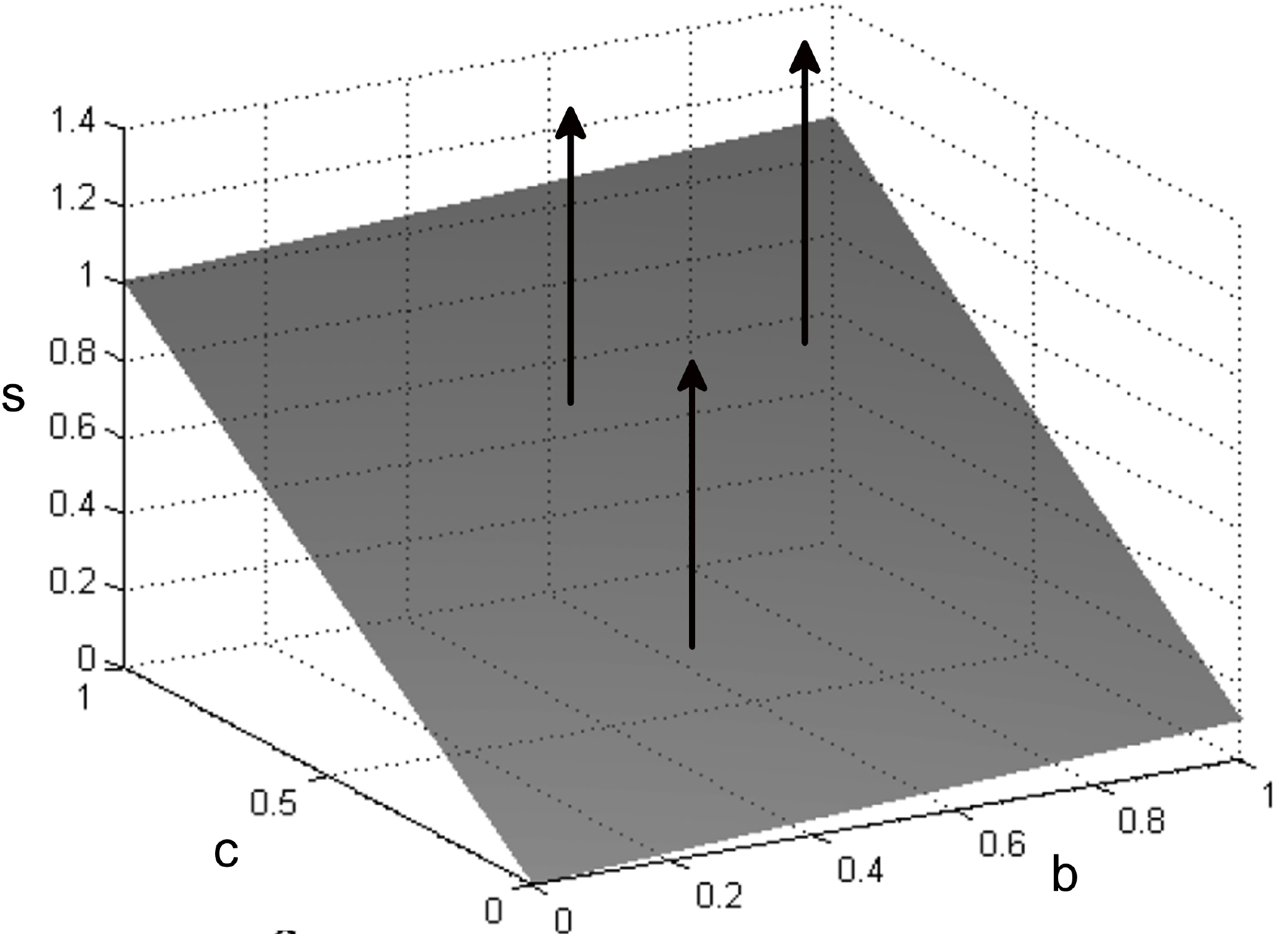}
\caption{plane 2}
\label{fig:plane2}
\end{subfigure}
\begin{subfigure}[b]{0.31\columnwidth}
 \centering
\includegraphics[width=1\columnwidth]{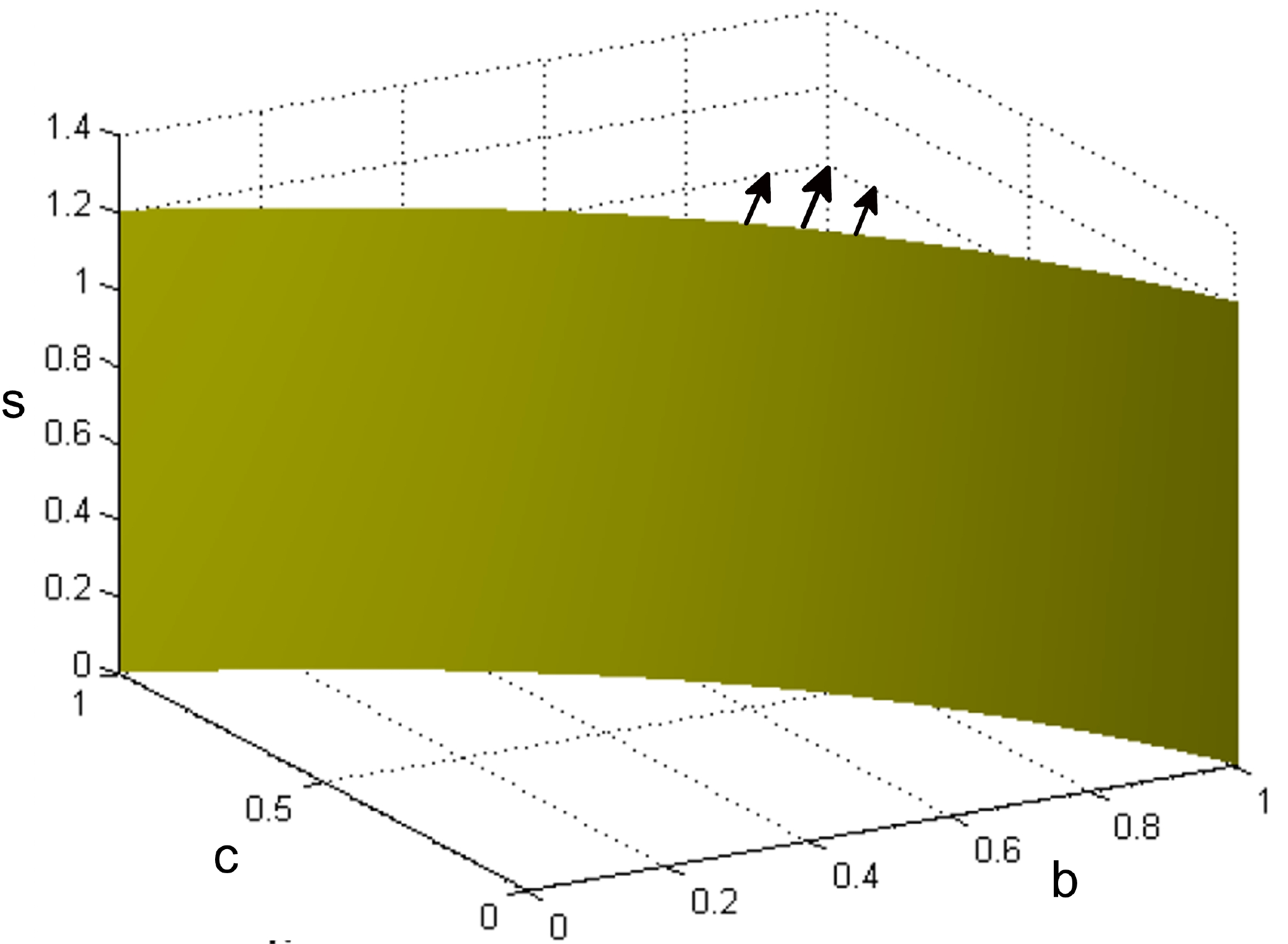}
\caption{vertical surface}
\label{fig:vertical}
\end{subfigure}

\begin{subfigure}[b]{1\columnwidth}
 \centering
\includegraphics[width=0.75\columnwidth]{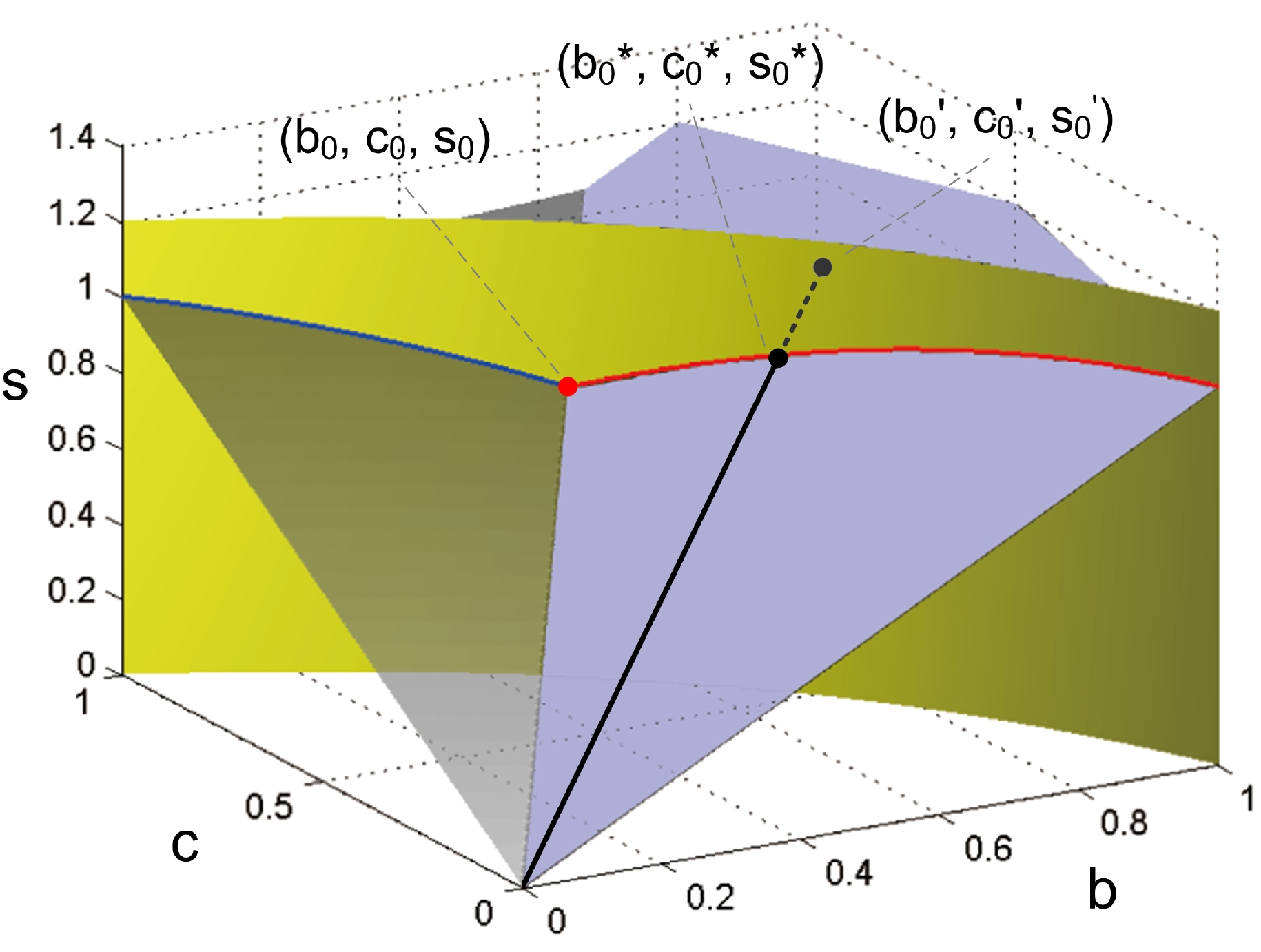}
\caption{Feasible solution space}
\label{fig:feasible}
\end{subfigure}
\caption{3D space of optimization problem (\ref{speedup:nonconvex_problem_1})}
\end{figure}
Each of constraints in the optimization problem (\ref{speedup:nonconvex_problem_1}) defines a feasible space in the three-dimension space.
In Figure \ref{fig:plane1}, the space above the plane is a feasible space satisfying constraint (\ref{plane1}), where the plane corresponds to $b+\alpha c = s$.
For constraint (\ref{plane2}), $\lambda b+c = s$ draws a plane and the feasible space is above the plane shown in Figure \ref{fig:plane2}.
Similarly, when constraint (\ref{convex:1}) makes its right-hand-side equal to the left-hand-side, we draw a vertical curved surface seen in Figure \ref{fig:vertical} and the space outside the vertical surface is the feasible space\footnote{As the arrows direct.}.
As a result, the feasible solutions subject to these three constraints must be above both planes and
outside the vertical curved surfaces. Below we will prove the minimum value of $s$ in the feasible solution space must be on the vertical surface and one plane.

First assume that we have a point $(b_0',c_0',s_0')$ which satisfies all constraints but
is not on the vertical surface. If we connect the origin $(0,0,0)$ and $(b_0',c_0',s_0')$, this line must have an intersection point $(b_0^*,c_0^*,s_0^*)$ with the
vertical surface. It is easy to observe that $s_0^* < s_0'$ - see in Figure \ref{fig:feasible}.
This means that any point which is not on the vertical surface can find a point with smaller value of  $s$ on the vertical surface which satisfies all constraints. Therefore,
the point with the minimal $s$ must be on the vertical surface.
Similarly, the minimal $s$ must be on one of the two planes. Otherwise, if it is not on
any plane, we always can find a projected point on one plane which has a smaller value of $s$.

We have shown above that to obtain the minimal value of $s$ the point must be on the vertical surface and one plane.
Then, the two planes have an intersection line and this line intersects with the vertical surface at a point denoted by $(b_0,c_0,s_0)$.
By taking constraints (\ref{plane1})(\ref{plane2}) and (\ref{convex:1}), we formulate a piece-wise function of $s$ with
respect to $b$ as follows.
\begin{equation}
\label{eq:piecewise}
s(b) =
\left\{
\begin{array}{ll}
\frac{(\alpha \lambda^2 - \alpha \lambda)b^2 + b - 1}{(\alpha \lambda - \alpha + 1)b - 1}  &  0< b \leq b_0 \\
\frac{(1-\alpha)b^2 + (\alpha \lambda +\alpha-1) b - \alpha}{(\alpha \lambda - \alpha + 1)b - 1}  & b_0<  b \leq 1
\end{array} \right.
\end{equation}
This function covers all points which are on the vertical surface and one plane and at same time satisfy all constraints.
By doing some calculus, we know that Eq. (\ref{eq:piecewise}) is monotonically decreasing in $(0,b_0]$ and monotonically increasing in $(b_0,1]$.
Therefore, the minimal value of Eq. (\ref{eq:piecewise}) can be obtained at $(b_0,c_0,s_0)$. 
The complete proof is given by Lemma \ref{lemma:minimal_value} in Appendix I. 
It means that we can obtain the optimal solution of optimization problem (\ref{speedup:nonconvex_problem_1}) by solving the following equation system.

\begin{equation}\label{e:equation-system}
\begin{cases}
~b_0+xc_0 = s_0 \\
~\lambda b_0+c_0 = s_0 \\
~\lambda b_0^2 \!+ \! (\!\alpha\lambda \!-\! \alpha \!+\! 1\!)b_0c_0 \!-\!(\lambda \! + \!1\!)b_0 \!-\! c_0 \!+ \!1  = 0
\end{cases}
\end{equation}
By joining the first two equations we have $c_0 = \frac{1-\lambda}{1-\alpha} \times b_0$, and applying it to the last equation in (\ref{e:equation-system}) gives
\[
(-\alpha \lambda^2 + \alpha \lambda - \alpha +1) b_0^2 + (\alpha\lambda + \alpha - 2)b_0 + (1 - \alpha) = 0
\]

	By the well-known Quadratic Formula we get the two roots
	of the above quadratic equation.
	\begin{align}
       \small
	b_0^1 & = \frac{(2 - \alpha \lambda - \alpha) + (1-\lambda)\sqrt{-3\alpha^2 + 4\alpha }}{2(-\alpha \lambda^2 + \alpha \lambda - \alpha +1)} \label{e:root-one}\\
	b_0^2 & = \frac{(2 - \alpha \lambda - \alpha) - (1-\lambda)\sqrt{-3\alpha^2 + 4\alpha }}{2(-\alpha \lambda^2 + \alpha \lambda - \alpha +1)} \label{e:root-two}
	\end{align}

	We can prove that $b_0^2$ is larger than 1 and thus should be dropped (since we
	require $0\le b \le 1$), while $b_0^1$ is in the range of $[0,1]$. 
	The detailed proof is given by Lemma \ref{lemma:out_range} in Appendix I.
	As a result, we obtain the optimal solution $(b_0^1,\frac{1-\alpha}{1-\lambda}b_0^1,\frac{1-\alpha \lambda}{1-\lambda}b_0^1)$ for Eq. (\ref{e:equation-system}). Thus, we have

\[\small
\begin{split}
 S(\alpha,\lambda) & = \frac{1-\alpha\lambda}{1-\lambda}b_0^1 \\
& = \frac{(1-\alpha \lambda)((2-\alpha \lambda-\alpha) +(\lambda-1)\sqrt{4\alpha-3\alpha^2})}{2(1-\alpha)(\alpha \lambda-\alpha \lambda^2-\alpha+1)}
\end{split}
\]
Therefore, Lemma \ref{lemma:speedup} is proved.
\end{proof}

Lemma \ref{lemma:speedup} shows that any IMC task set which is schedulable by an optimal clairvoyant MC scheduling
algorithm on
a speed-$S(\alpha,\lambda)$ is also schedulable on a unit-speed processor by EDF-VD.
As a direct result, we have the following theorem,
\begin{theorem}
 \label{theorem:speedup}
 The speedup factor of EDF-VD with IMC task sets is upper bounded by
\[f = \frac{2(1-\alpha)(\alpha \lambda-\alpha \lambda^2-\alpha+1)}{(1-\alpha \lambda)((2-\alpha \lambda-\alpha) +(\lambda-1)\sqrt{4\alpha-3\alpha^2})}\]
\end{theorem}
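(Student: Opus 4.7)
The plan is to obtain Theorem \ref{theorem:speedup} as a direct corollary of Lemma \ref{lemma:speedup} together with a standard processor-speed scaling argument. The quantity $f$ stated in the theorem is precisely $1/S(\alpha,\lambda)$, so the task reduces to showing that any IMC task set schedulable on a unit-speed processor by an optimal clairvoyant algorithm satisfies, after uniform scaling, the hypothesis of Lemma \ref{lemma:speedup}.

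First I would recall that for any IMC task set that is schedulable (even by a clairvoyant algorithm) on a processor of speed $s$, the total utilization in each mode cannot exceed $s$: both $U^{LO}=\ull+\uhl=b+\alpha c$ and $U^{HI}=\ulh+\uhh=\lambda b+c$ are necessary conditions, since within either mode the scheduler must still process all the work released in that mode. Therefore, on a unit-speed processor the parameters satisfy $\max\{b+\alpha c,\ \lambda b+c\}\le 1$.

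Next I would invoke the scaling observation: running EDF-VD on a speed-$f$ processor is equivalent to running EDF-VD on a unit-speed processor with all execution times (hence all utilizations) divided by $f$. Define the scaled utilizations $b'=b/f$, $c'=c/f$, $\alpha c'=\alpha c/f$, $\lambda b'=\lambda b/f$. Setting $f=1/S(\alpha,\lambda)$, the necessary conditions above imply $\max\{b'+\alpha c',\ \lambda b'+c'\}\le S(\alpha,\lambda)$. Lemma \ref{lemma:speedup} then yields inequality (\ref{speedup:condition}), which is exactly the hypothesis of Theorem \ref{theorem:overall}; hence the scaled task set is EDF-VD schedulable on a unit-speed processor, which means the original set is EDF-VD schedulable on a speed-$f$ processor. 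Substituting the closed form of $S(\alpha,\lambda)$ from Lemma \ref{lemma:speedup} yields the expression for $f$ stated in the theorem.

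The routine steps are the scaling and the algebraic inversion $f=1/S(\alpha,\lambda)$. The only subtle point is justifying that $U^{LO}\le s$ and $U^{HI}\le s$ are indeed necessary even under clairvoyance: I would argue this by taking a sufficiently long busy interval in each stable mode, where the cumulative released work for tasks active in that mode grows like $U^{LO}\cdot t$ or $U^{HI}\cdot t$ and must be bounded by $s\cdot t$. Given this, the rest of the argument is essentially a one-line application of Lemma \ref{lemma:speedup}, so I do not expect any serious technical obstacle beyond stating the necessary-condition step carefully.
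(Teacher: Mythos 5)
Your proposal is correct and follows essentially the same route as the paper: the paper also treats Theorem \ref{theorem:speedup} as an immediate consequence of Lemma \ref{lemma:speedup} combined with the standard necessary conditions $U^{LO}\le s$ and $U^{HI}\le s$ for clairvoyant schedulability on a speed-$s$ processor, and the identity $f=1/S(\alpha,\lambda)$. Your write-up merely normalizes the clairvoyant processor to unit speed and rescales by $f$, whereas the paper normalizes the EDF-VD processor to unit speed and takes the clairvoyant processor to have speed $S(\alpha,\lambda)$; these are the same argument.
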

The speedup factor is shown to be a function with respect to $\alpha$ and $\lambda$. Figure \ref{speedup_fig} plots the 3D image of this function
 and Table \ref{speedup_table} lists some of the values with different $\alpha$ and $\lambda$. By doing some calculus, we obtain the maximum value $1.333$, i.e., $4/3$, of the speedup factor function when $\lambda=0$ and $\alpha = \frac{1}{3}$,  which is highlighted in Figure \ref{speedup_fig} and Table \ref{speedup_table}.
We see that the speedup factor bound is achieved when the task set is a classical MC task set.
\begin{figure}
 \centering
 \includegraphics[width=0.6\columnwidth]{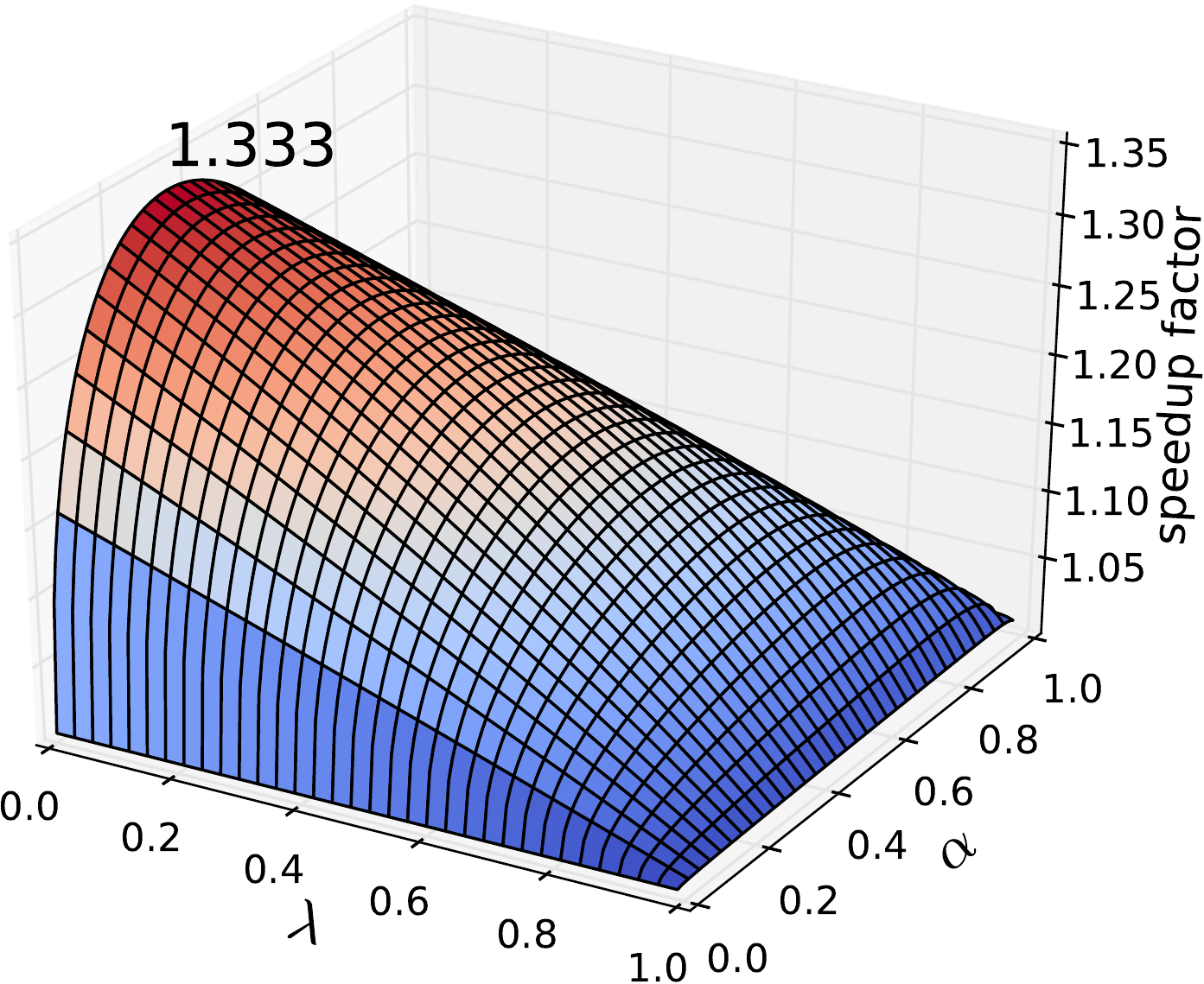}
\caption{3D image of speedup factor w.r.t $\alpha$ and $\lambda$}
\label{speedup_fig}
\end{figure}
\begin{table}
\centering
\scriptsize
\begin{tabular}[b]{|l|l|l|l|l|l|l|l|}
\hline
 \diagbox{$\lambda$}{$\alpha$} & 0.1 & 0.3 & $1/3$ & 0.5 & 0.7 & 0.9 & 1 \\ \hline
0 & 1.254 & 1.332 & \textbf{\color{red}1.333} & 1.309 & 1.227 & 1.091 & 1 \\ \hline
0.1 & 1.231 & 1.308 & 1.310 & 1.293 & 1.219 & 1.090 & 1 \\ \hline
0.3 & 1.183 & 1.256 & 1.259 & 1.254 & 1.201 & 1.087 & 1 \\ \hline
0.5 & 1.134 & 1.195 & 1.200 & 1.206 & 1.174 & 1.083 & 1 \\ \hline
0.7 & 1.082 & 1.126 & 1.130 & 1.143 & 1.133 & 1.074 & 1 \\ \hline
0.9 & 1.028 & 1.046 & 1.048 & 1.056 & 1.061 & 1.048 & 1 \\ \hline
1 & 1 & 1 & 1 & 1 & 1 & 1 & 1 \\ \hline
\end{tabular}
 \caption{Speedup factor w.r.t $\alpha$ and $\lambda$}
\label{speedup_table}
\end{table}
From Figure \ref{speedup_fig} and Table \ref{speedup_table}, we observe different trends for the speedup factor with respect to $\alpha$ and $\lambda$.
\begin{itemize}
 \item First, given a fixed $\lambda$, the speedup factor is not a monotonic function with respect to $\alpha$.
The relation between $\alpha$ and the speedup factor draws a downward parabola. Therefore, a straightforward conclusion regarding the impact of $\alpha$ on the speedup factor cannot
be drawn.
\item Given a fixed $\alpha$, the speedup factor is a monotonic decreasing function
with respect to increasing $\lambda$. It is seen that increasing $\lambda$ leads to
a smaller value of the speedup factor. \textit{This means that a larger $\lambda$ brings a positive effect on the schedulability of an IMC task set}.
\end{itemize}


\section{Extension to Elastic Mixed-Criticality Model}
\label{section:emc}
Su and Zhu in \cite{Su2013} introduced an \textit{Elastic Mixed-Criticality} (EMC) task model, where the \textit{elastic model} \cite{buttazzo2002elastic} is used to model low criticality tasks. When the MC system switches to high criticality mode,
low-criticality tasks scale up their original period to a larger
period such that low-criticality tasks continue to be scheduled with a degraded service (less frequently).  
Although the EMC model has been studied by \cite{Su2013}\cite{Su2014}\cite{jan2013maximizing}, there is not a utilization-based sufficient test
for the EMC model.
Therefore, in this section, we prove that the theories proposed in Section \ref{section:sufficientTest} apply to the EMC model \cite{Su2013} as well. 
Here, we use $T_i^{max}( \ge T_i)$ to denote the extended period of a \lo task $\tau_i$. Since, in the EMC model, the WCETs of a \lo tasks are the same in two modes, the utilization of \lo task $\tau_i$ in \hi mode is computed as $u_i^{HI}=C_i^{LO}/T_i^{max}$.
\begin{proposition}[Lemma 1 from \cite{Su2013}]
\label{proposition:emc}
 A set of EMC tasks is EMC schedulable under EDF-VD if $\uhh + \ulh \le 1$.
\end{proposition}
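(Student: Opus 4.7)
The plan is to reduce the claim to classical EDF schedulability on a uniprocessor, with a careful handling of the mode transition. The essential observation is that, in HI mode, the EMC system behaves as an ordinary implicit-deadline sporadic task set: each HI task $\tau_i$ releases jobs every $T_i$ units with WCET $C_i^{HI}$, contributing $u_i^{HI}=C_i^{HI}/T_i$; each LO task $\tau_i$ releases jobs every $T_i^{max}$ units with WCET $C_i^{LO}$, contributing $u_i^{HI}=C_i^{LO}/T_i^{max}$. These quantities sum to precisely $\uhh + \ulh$, so by the classical Liu--Layland result for EDF on an implicit-deadline sporadic task set, feasibility on a unit-speed uniprocessor holds whenever this sum is at most $1$.

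First I would establish that in the stable HI regime (sufficiently after the switch), standard EDF schedulability applies directly because the utilization bound is met. Second, I would handle the mode transition by arguing that no carry-over LO job can precipitate a deadline miss: at the switch instant $t_1$, any LO job already released has its period extended from $T_i$ to $T_i^{max}\ge T_i$, which only pushes its absolute deadline later; under EDF, a later deadline can only loosen scheduling constraints. Analogously, any HI task whose virtual deadline was tightened in LO mode by the EDF-VD mechanism has that deadline restored to its original value $a_i+T_i$ after the switch, again relaxing constraints. Hence the demand in the HI regime is dominated by that of a fresh HI-regime task set, which is covered by the Liu--Layland step.

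The main obstacle is making this transition argument rigorous in the presence of EDF-VD's virtual deadlines. My preferred route is a contradiction argument in the spirit of Theorem \ref{theorem:high_crit}: assume a first deadline miss occurs at some $t_2 > t_1$, extract a minimal violating job set $\mathcal{J}$, and upper-bound the cumulative execution demand in $[t_1,t_2]$ by $\uhh(t_2-t_1)$ from HI tasks and by $\ulh(t_2-t_1)$ from the extended-period LO tasks, with carry-over contributions absorbed by the extended period $T_i^{max}$. Comparing the resulting bound with the available processor time $t_2-t_1$ would force $\uhh + \ulh > 1$, contradicting the hypothesis and yielding the sufficient condition.
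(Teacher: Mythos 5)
There is a genuine gap, and you are in fact trying to prove more than the paper claims. Proposition \ref{proposition:emc} is quoted from \cite{Su2013} and is not proved in this paper at all; more importantly, the text immediately following it states explicitly that the condition $\uhh + \ulh \le 1$ is only a \emph{necessary} test for EDF-VD schedulability of an EMC task set, i.e., a task set satisfying it may still be unschedulable under EDF-VD. Your proposal reads the proposition as a sufficiency claim for full EDF-VD schedulability (including the mode switch) and tries to establish that; this contradicts the paper's own reading, and the paper's actual sufficient test (Theorem \ref{theorem:overall}, carried over to EMC in Theorem \ref{theorem:emc}) requires the strictly stronger condition $\frac{\uhl}{1-\ull} \le \frac{1-(\uhh+\ulh)}{\ull-\ulh}$ precisely because $\uhh+\ulh\le 1$ alone is not enough.

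The step that fails is your transition argument. It is not true that the demand in $[t_1,t_2]$ is dominated by that of a ``fresh'' HI-regime task set with utilization $\uhh+\ulh$. A carry-over \hi job that has already executed $C_i^{LO}$ by the switch still owes $C_i^{HI}-C_i^{LO}$ before its \emph{original} deadline, and --- crucially --- the processor time available to pay this debt has already been partly consumed in $[0,t_1]$ by \lo tasks running at their full rate $\ull$, which can greatly exceed $\ulh$. Relaxing deadlines at the switch does not erase this accumulated backlog; bounding it is exactly what the term $\frac{a_1}{x}u_i^{LO}$ in Proposition \ref{proposition:hi_crit} and the factor $x$ in Theorem \ref{theorem:high_crit} are for, and it is why EDF-VD needs virtual deadlines in the first place. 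If your Liu--Layland-style reduction worked, one could take $x=1$ (plain EDF) and every MC/EMC task set with $U^{LO}\le 1$ and $\uhh+\ulh\le 1$ would be schedulable, which is known to be false. So the contradiction you aim for ($\uhh+\ulh>1$) cannot be forced from the demand bound you propose.
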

Here, in order to keep the consistence, we use $\ulh$ to denote $U(L,min)$ in \cite{Su2013}.
Proposition \ref{proposition:emc} is provided in \cite{Su2013} to check the schedulability of an EMC task set on a uniprocessor.
However, Proposition \ref{proposition:emc} is a necessary test. This means that even if
a given task set satisfies the condition presented in Proposition \ref{proposition:emc},
it is still possible that the task set is unschedulable under EDF-VD.
Below, we prove that the theories proposed in Section \ref{section:sufficientTest} can apply to the EMC model.

First, in \lo mode, since the EMC model just behaves like the classical MC model, Theorem \ref{theorem:lowcrit} holds for the EMC model.
Then we discuss the schedulability of the EMC model in \hi mode.
We have the following definition for the carry-over job of a low criticality task in the EMC model:
\begin{definition}
 In the EMC model, \textsf{carry-over job} $J_i$ of low criticality task $\tau_i$ has its release time $a_i < t_1$ and \textbf{original} deadline $d_i > t_1$.
\end{definition}
Then, we prove the following proposition for a \textsf{carry-over job}.
\begin{proposition}
\label{proposition:emc_carry}
 For an EMC \textsf{carry-over job} $J_i$, if it completes its execution before switch time instant $t_1$, then its \textbf{original} deadline $d_i$ is $\le (a_1 +x(t_2-a_1))$.
\end{proposition}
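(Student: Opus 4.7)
The plan is to mirror the contradiction argument used in Proposition \ref{proposition:carry}, since the structural setup (the minimal bad job set $\mathcal{J}$, the time instants $t_1,t_2$, the reference job $J_1$, and EDF-VD's use of virtual deadlines only for \hi tasks) is identical in the EMC context. The hypothesis that $J_i$ completes before $t_1$ plays the same role that the condition ``cumulative execution equal to $(d_i-a_i)u_i^{LO}$'' plays in the IMC version: it tells us $J_i$ actually executes during $[a_i,t_1)$, so we can pin down a latest start time of $J_i$ before the switch.

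First I would suppose, for contradiction, that $d_i > a_1 + x(t_2-a_1)$. Let $t^\ast$ be the latest instant in $[a_i,t_1)$ at which $J_i$ is scheduled to execute. Since in \lo mode EDF-VD dispatches according to (virtual) deadlines and $J_i$ is a \lo job whose effective deadline equals its original $d_i$, any ready job at $t^\ast$ whose (virtual) deadline is strictly smaller than $d_i$ must already have completed by $t^\ast$. In particular, every job in $\mathcal{J}$ with (virtual) deadline no later than $a_1+x(t_2-a_1)<d_i$ that was released before $t^\ast$ is finished by $t^\ast$ and therefore contributes nothing to the workload in $[t^\ast,t_2]$.

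Next I would invoke the minimality of $\mathcal{J}$ exactly as in Proposition \ref{proposition:carry}: the subset of jobs in $\mathcal{J}$ released at or after $t^\ast$ by itself causes the deadline miss at $t_2$, because the jobs completed before $t^\ast$ do not influence the schedule on $[t^\ast,t_2]$. This strict subset contradicts the assumption that $\mathcal{J}$ is the minimal such set, so the supposition $d_i > a_1+x(t_2-a_1)$ must fail, giving the claimed bound $d_i \le a_1+x(t_2-a_1)$.

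The main obstacle I foresee is justifying step two cleanly in the EMC setting, because EMC's distinctive feature (period stretching in \hi mode) could in principle interact with the pre-switch schedule. However, before $t_1$ the EMC model is indistinguishable from the classical MC model scheduled by EDF-VD, and \lo tasks retain their original deadlines under EDF-VD, so the ``latest start time before $t_1$'' argument transfers verbatim. The only bookkeeping care needed is to emphasize that $d_i$ here denotes the \emph{original} deadline (as in the EMC \textsf{carry-over} definition above), which is the deadline actually used by EDF-VD for the \lo job $J_i$ while the system is still in \lo mode.
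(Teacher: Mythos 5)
Your proof is correct and follows essentially the same route as the paper's: a contradiction argument that takes $t^\ast$ to be the latest instant at which $J_i$ executes before $t_1$, observes that all jobs of $\mathcal{J}$ with deadlines at most $a_1+x(t_2-a_1)<d_i$ have then finished and contribute nothing to $[t^\ast,t_2]$, and concludes that the jobs released at or after $t^\ast$ form a smaller set causing the miss, contradicting the minimality of $\mathcal{J}$. Your added remark that the pre-switch EMC schedule coincides with the classical EDF-VD schedule (so the argument transfers) is a detail the paper leaves implicit, but it is the same proof.
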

\begin{proof}
 Consider that \textsf{carry-over job} $J_i$ completes its execution before switch time instant $t_1$.
Suppose that $J_i$ has its \textbf{original} deadline $d_i>(a_1 + x(t_2-a_1))$.
Let $t^*$ denote the latest time instant at which $J_i$ starts to execute before $t_1$.
At time instant $t^*$, all jobs in $\mathcal{J}$ with deadlines $\le(a_1 + x(t_2-a_1))$ then have finished their execution.
Therefore, these jobs do not have any execution within interval $[t^*,t_2]$.
This implies that jobs in $\mathcal{J}$ which are released at or after $t^*$ can form a smaller job set and this smaller job set is sufficient to cause deadline miss at $t_2$.
This contradicts the minimality of $\mathcal{J}$.
Therefore, in this case we have
$d_i\le(a_1 + x(t_2-a_1))$
\end{proof}

\begin{lemma}
 \label{lemma:emc}
 Lemma \ref{lemma:low_crit} still holds for \lo tasks of the EMC model in \hi mode.
\end{lemma}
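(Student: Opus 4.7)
The plan is to replay the proof of Lemma \ref{lemma:low_crit} essentially verbatim, swapping Proposition \ref{proposition:carry} for Proposition \ref{proposition:emc_carry} and reinterpreting $u_i^{HI}$ as $C_i^{LO}/T_i^{max}$ instead of $C_i^{HI}/T_i$. Since $T_i^{max}\ge T_i$ with equal numerator, we still have $u_i^{LO}\ge u_i^{HI}$, which is the only monotonicity property of the utilizations that the IMC proof actually used. I would split the argument into Case A (task $\tau_i$ releases a job inside $(a_1,t_2]$) and Case B (no release in that interval), exactly mirroring the IMC proof.

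In Case A, Sub-case 1 is the degenerate situation in which some job of $\tau_i$ has its deadline coinciding with $t_1$, so there is no carry-over job. Before $t_1$, $\tau_i$ accumulates execution at rate $u_i^{LO}$; after $t_1$, the stretched period $T_i^{max}$ applies and the rate becomes $u_i^{HI}=C_i^{LO}/T_i^{max}$. Hence $\eta_i\le t_1 u_i^{LO}+(t_2-t_1)u_i^{HI}$, and Proposition \ref{proposition:case_1_2} together with $u_i^{LO}\ge u_i^{HI}$ pushes this up to the target bound $(a_1+x(t_2-a_1))u_i^{LO}+(1-x)(t_2-a_1)u_i^{HI}$. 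Sub-case 2 contains a carry-over job $J_i$. Because EMC leaves the WCET unchanged across the switch, the carry-over job's maximum cumulative execution is still $(d_i-a_i)u_i^{LO}$, where $d_i$ is its \emph{original} deadline; only subsequent jobs of $\tau_i$ feel the period stretch and therefore contribute at rate $u_i^{HI}$. So $\eta_i\le d_i u_i^{LO}+(t_2-d_i)u_i^{HI}$, and Proposition \ref{proposition:emc_carry} substitutes $d_i\le a_1+x(t_2-a_1)$ to finish the case.

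Case B splits along whether the last job $J_i$ of $\tau_i$ released before $a_1$ is a carry-over job. If $d_i\le t_1$ the task simply runs at rate $u_i^{LO}$ up to $d_i$, yielding $\eta_i\le d_iu_i^{LO}$; if $d_i>t_1$, $J_i$ is a carry-over job, the same bound $\eta_i\le d_iu_i^{LO}$ applies, and Proposition \ref{proposition:emc_carry} again dominates $d_i$ by $a_1+x(t_2-a_1)$. Adding a non-negative $(1-x)(t_2-a_1)u_i^{HI}$ slack on the right-hand side delivers the target bound in both sub-cases.

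The main obstacle I expect is the argument underlying Sub-case 2: I must justify that the carry-over job $J_i$, if it has not completed its $C_i^{LO}$ by $t_1$, still has to finish by its original deadline $d_i$ rather than being allowed to stretch to $a_i+T_i^{max}$. This is what keeps the $(d_i-a_i)u_i^{LO}$ bound sound and, via Proposition \ref{proposition:emc_carry}, lets us pin down $d_i$. Once that semantic point is articulated carefully, everything else is a direct transcription of the IMC argument with the new interpretation of $u_i^{HI}$.
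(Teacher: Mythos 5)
There is a genuine gap in your handling of the carry-over job, and it sits exactly at the point you flag as the ``main obstacle.'' Your proposed resolution --- arguing that a carry-over job which has not finished by $t_1$ still has to finish by its \emph{original} deadline $d_i$ rather than being allowed to stretch to $a_i+T_i^{max}$ --- contradicts the EMC semantics: in the elastic model the whole point of the mode switch is that the \lo task's period, and hence the pending job's deadline, is stretched to $d_i^{max}=a_i+T_i^{max}$. Consequently you cannot confine the carry-over job's execution to $[a_i,d_i]$, and, more importantly, you cannot invoke Proposition \ref{proposition:emc_carry} the way you do: that proposition is conditioned on the job \emph{completing its execution before $t_1$}, so it says nothing about a carry-over job that keeps running after the switch under its extended deadline. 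In that situation your chain $\eta_i\le d_i u_i^{LO}+(t_2-d_i)u_i^{HI}$ followed by $d_i\le a_1+x(t_2-a_1)$ has no justification for the second step.

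The paper closes this gap with an additional case split on the extended deadline that your proof is missing. If $d_i^{max}>t_2$, then by Proposition \ref{proposition:t1_t2} (minimality of $\mathcal{J}$) the carry-over job cannot execute in $[t_1,t_2)$ at all, so its worst case is to have completed $C_i^{LO}$ before $t_1$ --- which is precisely the hypothesis of Proposition \ref{proposition:emc_carry}, and only then does $d_i\le a_1+x(t_2-a_1)$ become available. If instead $d_i^{max}\le t_2$, the job may well run past $t_1$, but its total demand $C_i^{LO}=(d_i^{max}-a_i)u_i^{HI}$ can then be charged at rate $u_i^{HI}$ over $[a_i,t_2]$, giving $\eta_i\le a_i u_i^{LO}+(t_2-a_i)u_i^{HI}$; the target bound follows from $a_i<t_1<a_1+x(t_2-a_1)$ (Proposition \ref{proposition:case_1_2}) and $u_i^{LO}\ge u_i^{HI}$, with no appeal to Proposition \ref{proposition:emc_carry} at all. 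The same dichotomy is needed in your Case B when $d_i>t_1$. The rest of your transcription (Sub-case 1, the $d_i\le t_1$ branch of Case B, and the observation that $u_i^{LO}\ge u_i^{HI}$ because $T_i^{max}\ge T_i$) matches the paper.
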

\begin{proof}
We can prove this lemma by doing some modifications on the proof of Lemma \ref{lemma:low_crit}.
Here, we mainly focus on the modified part. The proof uses the same notations explained in Section \ref{section:sufficientTest}.

For the EMC model, we need to consider a special case when \textsf{carry-over job} $J_i$ of \lo task $\tau_i $ has its extended deadline $d_i^{max} >t_2$.
Since $t_2$ is a deadline miss, a job with deadline $> t_2$ will not be scheduled within $[t_1,t_2)$ -see Proposition \ref{proposition:t1_t2}.
If $d_i^{max} >t_2$, job $J_i$ will not be executed after the
switch time instant $t_1$ and the maximum cumulative execution time of $\tau_i$ can be obtained as job $J_i$ completes its $C_i^{LO}$ before $t_1$.
Hence, the cumulative execution of task $\tau_i$ can
be bounded by,
\begin{equation}
\label{eq:emc_case2_1}
 \eta_i \le a_i\cdot u_i^{LO} + (d_i - a_i)u_i^{LO} = d_i \cdot u_i^{LO}
\end{equation}
By Proposition \ref{proposition:emc_carry}, we
replace $d_i$ with $(a_1+x(t_2-a_1))$ in Eq. (\ref{eq:emc_case2_1})
\begin{equation}
\small
\begin{split}
\label{eq:emc_second}
&\eta_i \le (a_1+x(t_2-a_1)) u_i^{LO} + (t_2 - (a_1+x(t_2-a_1)))u_i^{HI} \\
\Leftrightarrow & \eta_i \le (a_1 + x(t_2 - a_1))u_i^{LO} + (1 - x)(t_2 - a_1)u_i^{HI}
\end{split}
\end{equation}
The rest of the proof can follow the proof of Lemma \ref{lemma:low_crit}.
A complete proof can be found in Appendix II.

\end{proof}

Lemma \ref{lemma:emc} shows that Lemma \ref{lemma:low_crit} can still bound  the cumulative execution time of \lo tasks of the EMC model in \hi mode. Moreover, since there is no difference how the \hi tasks are scheduled in the EMC model or in the classical MC model, Proposition \ref{proposition:hi_crit} still holds for the \hi tasks in the EMC model. As a result, Theorem \ref{theorem:high_crit} holds for the EMC model as well. Then, we can directly obtain the following theorem,
\begin{theorem}
\label{theorem:emc}
 Theorem \ref{theorem:overall} is a sufficient test for the EMC model under EDF-VD.
\end{theorem}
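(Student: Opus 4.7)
The plan is to reuse the two-mode schedulability argument that produced Theorem \ref{theorem:overall} in the IMC setting, and simply verify that each of its two pillars still stands under the EMC semantics. Recall that Theorem \ref{theorem:overall} is obtained by intersecting the range for the deadline-scaling factor $x$ imposed by the \lo-mode test (Theorem \ref{theorem:lowcrit}) with the range imposed by the \hi-mode test (Theorem \ref{theorem:high_crit}). So it suffices to show that both of these hold verbatim for the EMC model, after which the very same algebraic combination yields the claimed conditions and the interval for $x$.

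First I would dispose of \lo mode. Before the switch time $t_1$, an EMC task set is indistinguishable from a classical MC task set: every \lo task runs with its original period $T_i$ and its WCET $C_i^{LO}$, and \hi tasks run with their tightened virtual deadlines. Hence Theorem \ref{theorem:lowcrit} carries over unchanged, giving the lower bound $x \ge \uhl/(1-\ull)$ on the scaling factor.

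Next I would assemble the \hi-mode test. The utilization bound for \hi tasks (Proposition \ref{proposition:hi_crit}) is already noted to transfer to EMC, since \hi tasks are scheduled identically in the two models. The corresponding bound for \lo tasks in \hi mode is exactly Lemma \ref{lemma:emc}, which was just shown to match the IMC bound of Lemma \ref{lemma:low_crit} once the new \textsf{carry-over} case (with $d_i^{max} > t_2$) is accounted for via Proposition \ref{proposition:emc_carry}. Plugging these two bounds into the cumulative-execution inequality $N > t_2$ that drives the proof of Theorem \ref{theorem:high_crit} and then applying the \lo-mode condition $(\ull + \uhl/x) \le 1$ reproduces, line for line, the upper bound $x \le (1-(\uhh+\ulh))/(\ull-\ulh)$.

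Finally I would combine the two ranges on $x$ exactly as in the proof of Theorem \ref{theorem:overall}: if $\uhh+\ull \le 1$, the task set is worst-case-reservation EDF-schedulable; otherwise, feasibility of $x$ requires $\uhl/(1-\ull) \le (1-(\uhh+\ulh))/(\ull-\ulh)$ together with the side conditions $\uhh+\ulh<1$, $\ull<1$, and $\ull>\ulh$, and any $x$ in the resulting interval makes EDF-VD schedule the EMC task set. The only genuinely non-trivial ingredient is the EMC-specific \textsf{carry-over} case inside Lemma \ref{lemma:emc}; once that is in hand, the remainder of this proof is essentially a bookkeeping reassembly of the IMC argument, so I do not anticipate any real obstacle here.
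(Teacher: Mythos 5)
Your proposal is correct and follows essentially the same route as the paper: establish that the \lo-mode test (Theorem \ref{theorem:lowcrit}) transfers unchanged, that Proposition \ref{proposition:hi_crit} and Lemma \ref{lemma:emc} supply the same per-task execution bounds as in the IMC case so that Theorem \ref{theorem:high_crit} carries over, and then combine the two constraints on $x$ exactly as in Theorem \ref{theorem:overall}. You correctly identify the EMC-specific carry-over case (handled via Proposition \ref{proposition:emc_carry}) as the only genuinely new ingredient, which is precisely where the paper concentrates its effort in Lemma \ref{lemma:emc}.
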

Since Theorem \ref{theorem:overall} is a sufficient test for the EMC model under EDF-VD, the speedup factor results we obtained in Section \ref{section:speed} also 
apply to the EMC model, i.e., the speedup factor bound of the EMC model under EDF-VD is also $4/3$ by using our proposed sufficient test.

\section{Experimental Evaluation} 
\label{section:evaluation}
In this section, we conduct experiments to evaluate the effectiveness of the proposed sufficient test for the IMC model in terms of schedulable task sets (acceptance ratio). 
Moreover, we conduct experiments to verify the observations obtained in Section \ref{section:speed} regarding the impact of $\alpha$ and $\lambda$ on the average acceptance ratio. 
Our experiments are based on randomly generated MC tasks.
We use a task generation approach, similar to that used in \cite{Easwaran2013}\cite{Ekberg2013article}, to randomly generate IMC task sets to evaluate the proposed sufficient test. 
Each task $\tau_i$ is generated based on the following procedure,  
\begin{itemize}
 \item pCriticality is the probability that the generated task is a \hi task; pCriticality$\in[0,1]$.
 \item Period $T_i$ is randomly selected from range $[100,1000]$.
 \item In order to have sufficient number of tasks in a task set, utilization $u_i$ is randomly drawn from the range$[0.05,0.2]$.
 \item For any task $\tau_i$, $C_i^{LO} = u_i * T_i$.
 \item $R \ge 1$  denotes the ratio $C_i^{HI}/C_i^{LO}$ for every \hi task. 
 If $L_i=HI$, we set $C_i^{HI} = R * C_i^{LO}$. It is easy to see that $\alpha$ used in the speedup factor function is equal to $\frac{1}{R}$; 
 \item $\lambda \in (0,1]$ denotes the ratio $C_i^{HI}/C_i^{LO}$ for every \lo task. 
  If $L_i=LO$, we set $C_i^{HI} = \lambda * C_i^{LO}$.
\end{itemize}
In the experiment, we generate IMC task sets with different target utilization $U$. 
Each task set is generated as follows.
 Given a target utilization $U$, we first initialize an empty task set. Then, we generate task $\tau_i$ according to the task generation procedure introduced above and 
add the generated task to the task set. The task set generation stops as we have 
\[ U-0.05 \le U_{avg} \le U+0.05\] 
where \[U_{avg}= \frac{U^{LO}+U^{HI}}{2}\] is the average total utilization of the generated task set.  
If adding a new task makes $U_{avg} >U+0.05$, then the added task will be removed and a new 
task will be generated and added to the task set till the condition is met.  

\subsection{Comparison with AMC \cite{burns2013towards}}
To date, the modified AMC given in \cite{burns2013towards} is the only related work considering the schedulability of the IMC model under fixed-priority scheduling.
Therefore, in the first experiment, we compare EDF-VD by using our proposed test to the AMC approach in \cite{burns2013towards} in terms of average acceptance ratio. 
In this experiment, $R$ is randomly selected from a uniform distribution $[1.5, 2.5]$. 
With different $\lambda$ and pCriticality settings, we vary   
$U_{avg}$ from 0.4 to 0.95 with step of 0.05, to evaluate the effectiveness of the proposed sufficient test in terms of the average acceptance ratios. 
We generate 10,000 task sets for each given $U_{avg}$. 
Due to space limitations, we only present the experimental results when pCriticality$=0.5$. Results 
with different pCriticality settings can be found in Appendix III.  
The results are shown in Figure \ref{figure:utilization_05}, where the x-axis denotes the varying $U_{avg}$ and the y-axis denotes the acceptance ratio. 
In the figures, let EDF-VD and AMC denote our proposed schedulability test and the one proposed in \cite{burns2013towards}, respectively. 
In most cases, EDF-VD outperforms AMC in terms of acceptance ratio. We observe the following trends:
\begin{enumerate}
 \item When $U_{avg}\in[0.5,0.8]$, EDF-VD always outperforms AMC in terms of acceptance ratio. 
However, if $U_{avg} > 0.8$ and $\lambda=0.3$ or $0.5$, AMC performs better than EDF-VD. 
The same trend is also found for the classical MC model under EDF-VD and AMC, see the comparison in \cite{Ekberg2013article}.
\item By comparing sub-figures in Figure \ref{figure:utilization_05}, we see that the average acceptance ratio 
improves when $\lambda$ increases. This confirms the observation for the speedup factor we obtained in Section \ref{section:speed}. The increasing $\lambda$ leads to a smaller speedup factor. As a result, it provides a better schedulability. We need to notice that when $\lambda$ increases, not only EDF-VD improves its acceptance ratio but the acceptance ratio of AMC \cite{burns2013towards} also improves. 
\end{enumerate}


%
\begin{figure*}[t]
\centering
\begin{subfigure}[b]{0.62\columnwidth}
 \centering
 \includegraphics[width=\columnwidth]{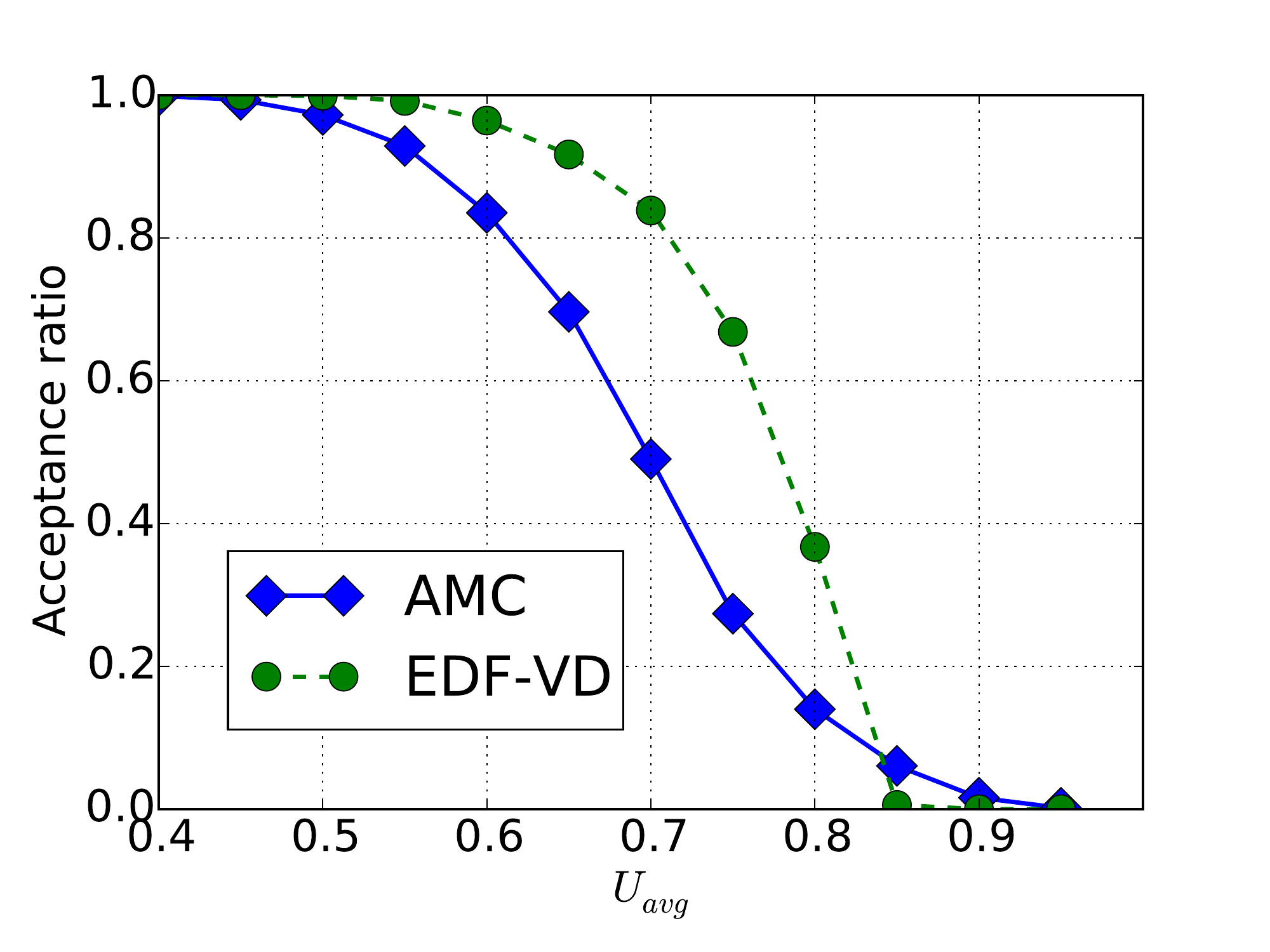}
 \caption{$\lambda = 0.3$}
 \label{05amc03}
\end{subfigure}
\begin{subfigure}[b]{0.62\columnwidth}
 \centering
\includegraphics[width=\columnwidth]{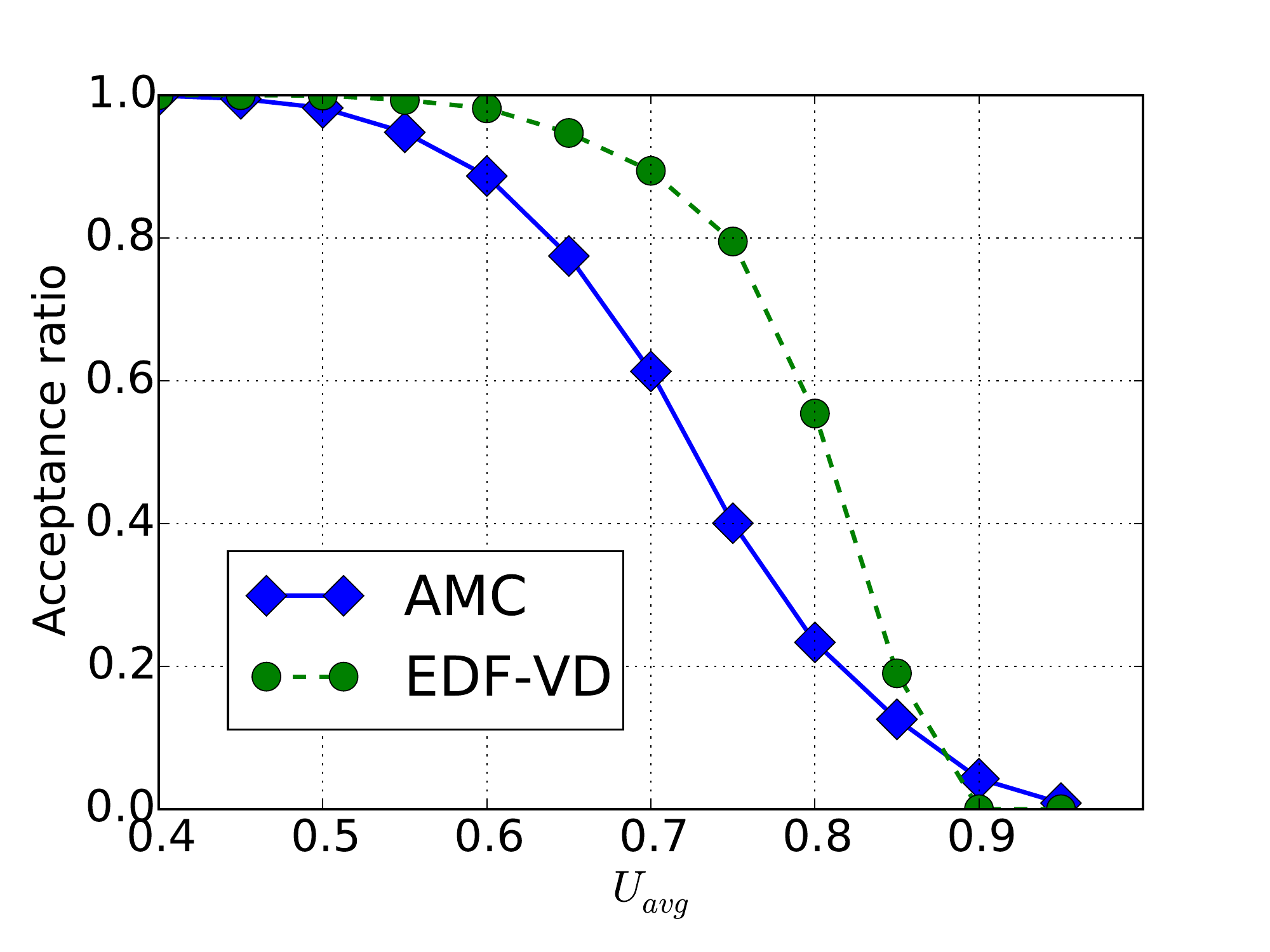}
\caption{$\lambda = 0.5$}
 \label{05amc05}
\end{subfigure}
\begin{subfigure}[b]{0.62\columnwidth}
 \centering
\includegraphics[width=\columnwidth]{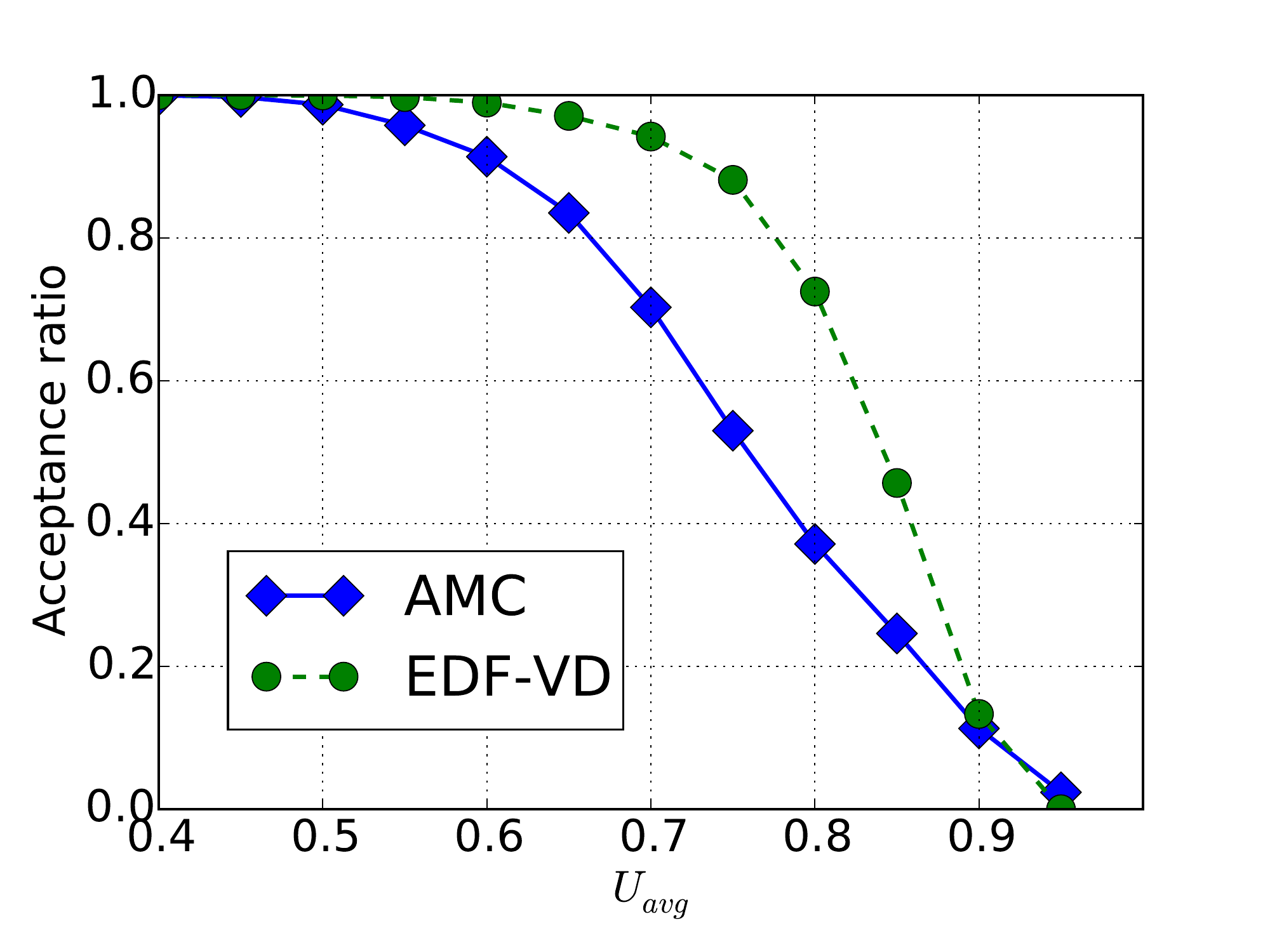}
\caption{$\lambda = 0.7$}
 \label{05amc05}
\end{subfigure}
\caption{Varying $U_{avg}$ with different $\lambda$ and pcriticality=0.5}
 \label{figure:utilization_05}
\end{figure*}

\subsection{Impact of $\alpha$ and $\lambda$}
Above, we compare our proposed sufficient test to the existing AMC approach. 
In this section, 
we conduct experiments to further evaluate the impact of $\lambda$ and $\alpha$ ($1/R$) on the 
acceptance ratio. 
In this experiment, we select $U_{avg}=\{ 0.65,0.7,0.75,0.8, 0.85\}$ to conduct experiments. 
We fix $U_{avg}$ to a certain utilization and vary $\lambda$ and $\alpha$ to evaluate the impact. 

We first show the results for $\lambda$.
The results are depicted in Figure \ref{lambda_pcrit05}, where the x-axis denotes 
the value of $\lambda$ from 0.2 to 0.9 with step of 0.1 and the y-axis denotes the average acceptance ratio. $R$ is randomly selected from a uniform distribution $[1.5, 2.5]$ and pCriticality$=0.5$. 
Similarly, 10,000 task sets are generated for each point in the figures. 
A clear trend can be observed that the acceptance ratio increases as $\lambda$ increases. This trend confirms the positive impact of increasing $\lambda$ on the schedulability which we have observed in Section \ref{section:speed}.


\begin{figure}
 \centering
\includegraphics[width=0.8\columnwidth]{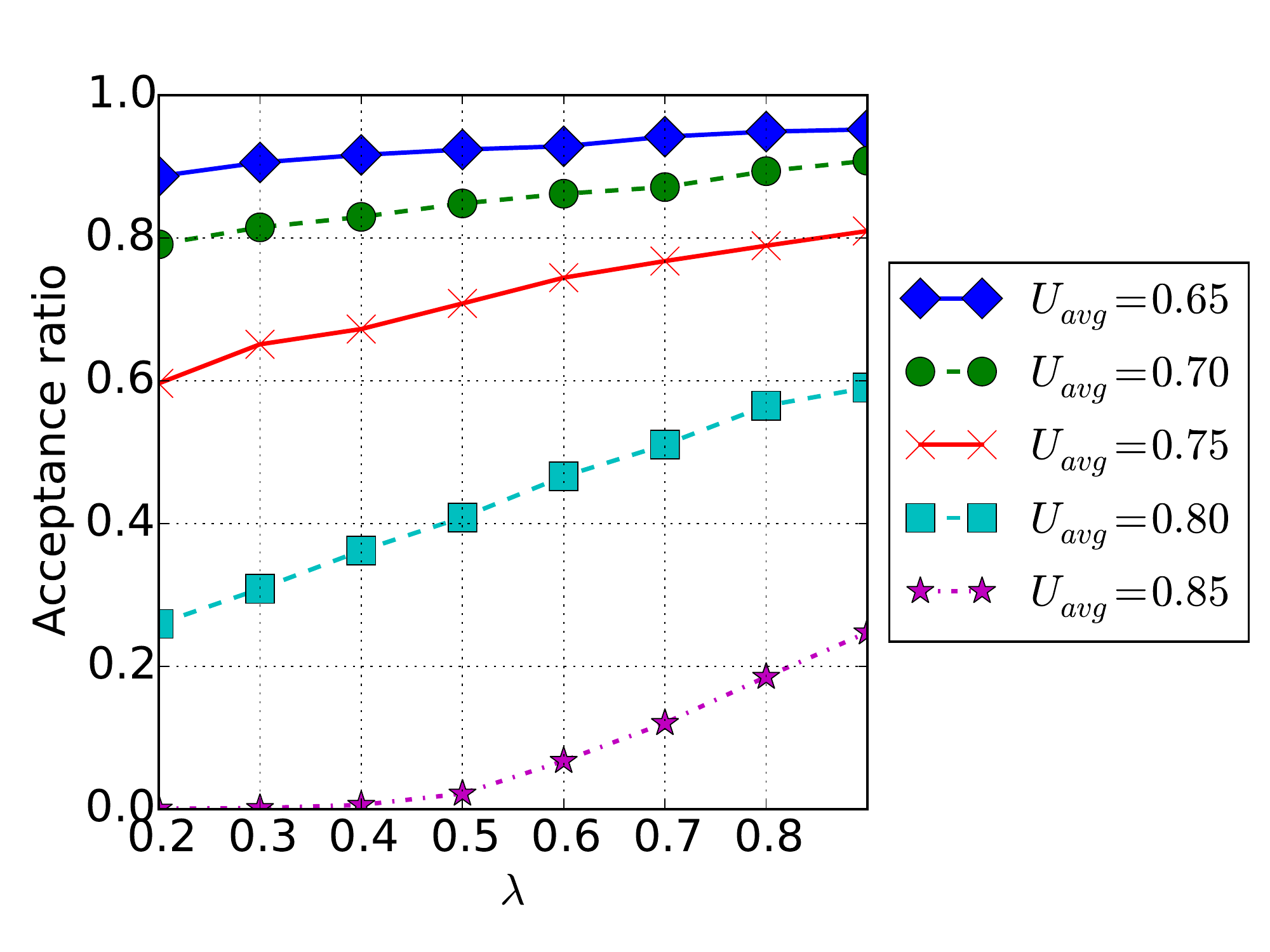}
\caption{Impact of $\lambda$}
 \label{lambda_pcrit05}
\end{figure}

Next we conduct experiments to evaluate the impact of $\alpha$ on the schedulability. 
Similarly, we fix $U_{avg}$ and vary $\alpha$ to carry out the experiments. 
Due to $\alpha = \frac{1}{R}$, if $\alpha$ is given, we compute the corresponding $R$ to generate task sets. 
The results are depicted in Figure \ref{fig:hi05}, where $\lambda=0.5$. 
The x-axis denotes the varying $\alpha$ from 0.1 to 0.9 with step of 0.1. while the y-axis denotes 
the average acceptance ratio.  
First, from Table \ref{speedup_table}, we see that with increasing $\alpha$ the speedup factor first increases till a point. This means within this range the scheduling performance of EDF-VD gradually decreases. After that point, 
the speedup factor decreases which means the scheduling performance of EDF-VD gradually improves. 
 The experimental results confirm what we have observed for $\alpha$ in Section \ref{section:speed}. 
The acceptance ratio gradually decreases till a point and then it increases.

\begin{figure}
 \centering
\includegraphics[width=0.8\columnwidth]{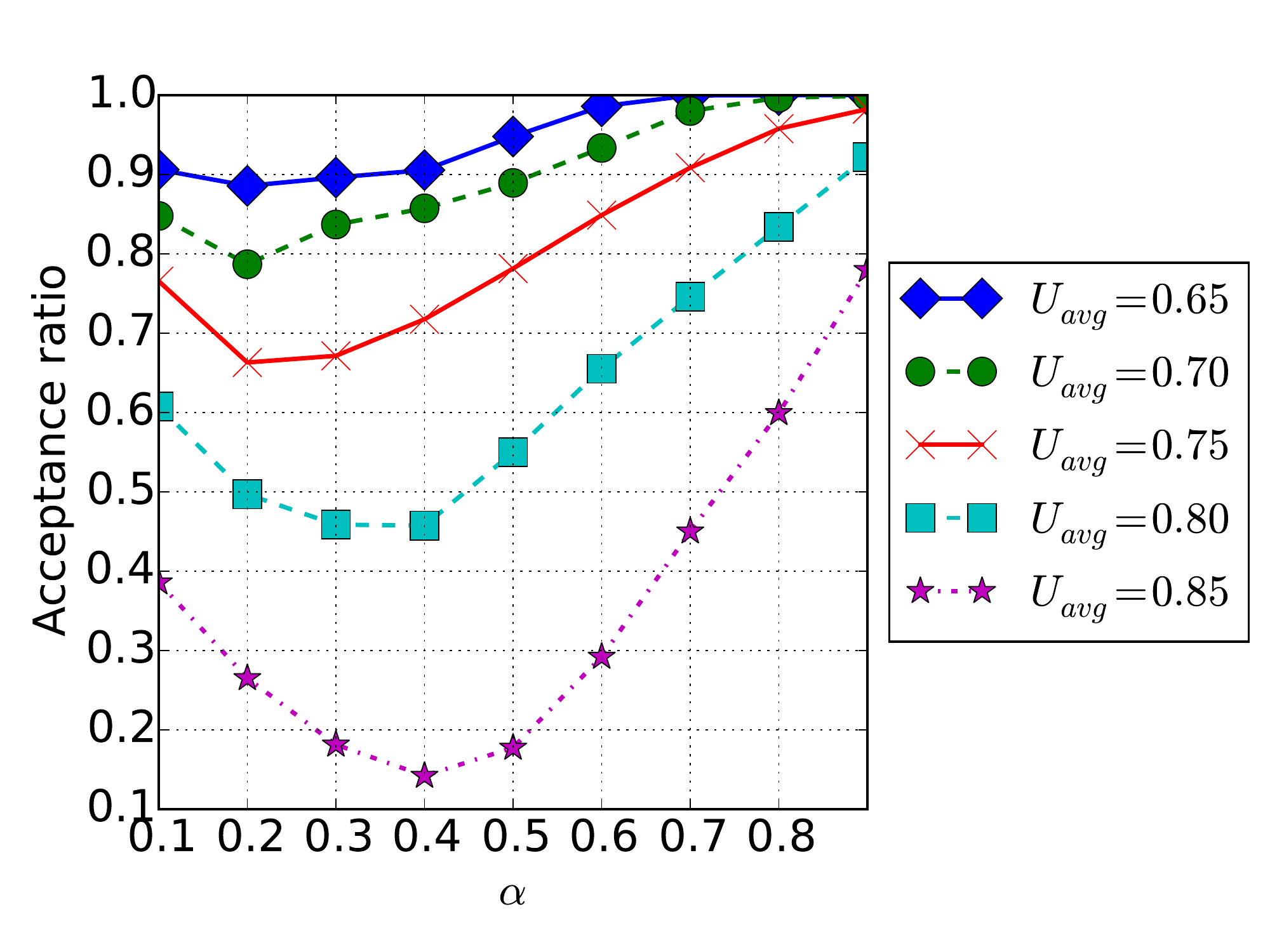}
\caption{Impact of $\alpha$}
 \label{fig:hi05}
\end{figure}

\section{Conclusions}
\label{section:conclusion}
In this paper, the \textit{imprecise mixed-criticality} (IMC) model from \cite{burns2013towards} is investigated. 
A sufficient test for the IMC model under EDF-VD is proposed and the proposed sufficient 
test later applies to the EMC model as well. 
Based on the proposed sufficient test, we derive a speedup factor function with respect to the utilization ratio $\alpha$ of all \hi tasks and the utilization ratio $\lambda$ of all \lo tasks. This speedup factor function provides a good insight to observe the impact of $\alpha$ and $\lambda$ on the speedup factor and quantifies suboptimality of EDF-VD for the IMC/EMC model in terms of speedup factor.
Our experimental results show that our proposed sufficient test outperforms the AMC approach in terms of acceptance ratio. 
Moreover, the extensive experiments also confirm the observations we obtained for the speedup factor. 

\bibliographystyle{IEEEtran}
\bibliography{IEEEabrv,main}
\newpage

\section*{Appendix I}
\begin{lemma}
\label{lemma:minimal_value}
 The minimum value of piece-wise function (\ref{eq:piecewise}) given in Section \ref{section:speed} is obtained when $b=b_0$.
\begin{equation}
\label{eq:piece}
s(b) =
\left\{
\begin{array}{ll}
\frac{(\alpha \lambda^2 - \alpha \lambda)b^2 + b - 1}{(\alpha \lambda - \alpha + 1)b - 1}  &  0< b \leq b_0 \\
\frac{(1-\alpha)b^2 + (\alpha \lambda +\alpha-1) b - \alpha}{(\alpha \lambda - \alpha + 1)b - 1}  & b_0<  b \leq 1
\end{array} \right.
\end{equation}
\end{lemma}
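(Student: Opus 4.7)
The plan is to show that each piece of $s(b)$ is monotone---the first piece non-increasing on $(0,b_0]$ and the second piece non-decreasing on $(b_0,1]$---so that, combined with continuity at $b_0$, the minimum of $s$ on $(0,1]$ is attained precisely at $b=b_0$. Continuity at $b_0$ is immediate from the system (\ref{e:equation-system}): since the third equation forces the two numerators to agree when evaluated at $b_0$ (with the common denominator fixed), both pieces evaluate to the same value $s_0$.

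To prove the monotonicity I would differentiate each rational piece via the quotient rule. Both pieces share the denominator $q(b)=(\alpha\lambda-\alpha+1)b-1$, and a direct check shows $q(b)<0$ for $b\in(0,1]$ (indeed $q(0)=-1$ and $q(1)=-\alpha(1-\lambda)\le 0$), so $q(b)^2>0$ throughout and the sign of each derivative is carried by its quadratic numerator. Introducing the shorthand $\beta=\alpha(1-\lambda)$ makes the bookkeeping tidy. For the first piece, the numerator of $s'(b)$ simplifies (after collecting terms) to $-\beta\bigl[\lambda(1-\beta)b^2-2\lambda b+1\bigr]$; the bracketed quadratic has discriminant $-4\lambda(1-\lambda)(1-\alpha)\le 0$ under the hypotheses on $\alpha,\lambda$ and takes the value $1$ at $b=0$, so it stays positive throughout, making the derivative non-positive and the first piece non-increasing. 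For the second piece, the numerator of $s'(b)$ reduces to $(1-\alpha)\bigl[(1-\beta)b^2-2b+(1+\beta)\bigr]$, and the bracketed quadratic factors as $(1-\beta)(b-1)\bigl(b-\tfrac{1+\beta}{1-\beta}\bigr)$. Since both roots are at least $1$ and the leading coefficient is positive, the factor is non-negative on $(0,1]$, so the second piece is non-decreasing.

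The main obstacle is the algebraic bookkeeping when expanding $p'(b)q(b)-p(b)q'(b)$ for each piece: the resulting expressions collapse to tidy quadratics only after the cubic-in-$b$ terms cancel and the substitution $\beta=\alpha(1-\lambda)$ is applied. A secondary subtlety is checking that the monotonicity bounds hold uniformly over the admissible parameter range $\alpha\in(0,1)$, $\lambda\in[0,1)$, for which the discriminant sign argument above is the decisive ingredient. Once both derivative signs are in hand, the piecewise structure together with continuity pins the global minimum of $s$ on $(0,1]$ at $b=b_0$, completing the proof.
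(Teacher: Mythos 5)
Your proof is correct and follows essentially the same route as the paper's: differentiate each rational piece, show the first is decreasing on $(0,b_0]$ and the second increasing on $(b_0,1]$, and conclude the minimum sits at $b=b_0$. Your derivative numerators match the paper's (modulo the paper's typos writing $x,y$ for $\alpha,\lambda$ and a $(1-\lambda)$ where $(1-\alpha)$ is meant in the second piece), and your explicit factorization $(1-\beta)(b-1)\bigl(b-\tfrac{1+\beta}{1-\beta}\bigr)$ together with the discriminant argument is a slightly cleaner way to fix the signs than the paper's root-location-plus-sample-point check.
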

\begin{proof}
 For case of $0< b \leq b_0$, its derivative is
\[
s'(b) =  \frac{\alpha(\lambda-1)(\lambda(\alpha y - \alpha + 1)b^2 - 2\lambda b + 1)}{((\alpha \lambda - \alpha + 1)b - 1)^2}
\]
The denominator is obviously positive. For the numerator,
since the discriminant of $\lambda(\alpha \lambda - \alpha + 1)b^2 - 2 \lambda b + 1 = 0$
is $(2\lambda)^2 - 4\lambda(\alpha \lambda - \lambda + 1)$, which is negative since $0 < \lambda < 1$,
so we know $\lambda(\alpha \lambda - \alpha + 1)b^2 - 2 \lambda b + 1 > 0$. Moreover, we have $\lambda-1 < 0$, so putting them together we know the numerator is negative. In summary,
$s'(b)$ is negative and thus $s(b)$ is monotonically decreasing with respect to $b$ in the range $b \in (0,b_0]$.

For case of $b_0<  b \leq 1$,
we can compute the derivative of $s(b)$ by
\[
s'(b) =  \frac{(1-\lambda)((\lambda y - x + 1)b^2 -2b - (\lambda y -x -1))}{((\lambda y - x+1)b - 1)^2}
\]
The denominator is obviously positive.
For the numerator, we focus on $( x \lambda - x + 1)b^2 -2b - (x \lambda -x -1)$ part. The following equation
\[
(x \lambda - x + 1)b^2 -2b - (x \lambda -x -1) = 0
\]
has two roots $b_1 = 1$ and $b_2 = \frac{1+(x-x \lambda)}{1-(x-x\lambda)}$, which is greater than $1$, so we know $(x \lambda - x + 1)b^2 -2b - (x\lambda -x -1)$ is either always positive
or always negative in the range of
$b\in (b_0,1)$. Since we can construct
$(x\lambda - x + 1)b^2 -2b - (x\lambda -x -1) > 0$ with
$x=\lambda=b = 0.5$, so we know $(x \lambda - x + 1)b^2 -2b - (x \lambda -x -1)$ is always positive.
Moreover, since $1-x > 0$, the numerator
of $s'(b)$ is positive, so overall
$s'(b)$ is positive, and thus $s(b)$
is monotonically increasing with respect to $b$ in the range of $b\in (b_0,1]$.

In summary, we have proved $s(b)$ is monotonically
decreasing in $(0, b_0]$,
and monotonically increasing in $( b_0, 1]$, both
with respect to $b$, so the smallest value of $s(b)$
must occur at $b_0$.
\end{proof}

\begin{lemma}
\label{lemma:out_range}
If $0 <\alpha <1$ and $0 \le \lambda < 1$, then
 \begin{align}
       \small
	b_0^1 & = \frac{(2 - \alpha \lambda - \alpha) + (1-\lambda)\sqrt{-3\alpha^2 + 4\alpha }}{2(-\alpha \lambda^2 + \alpha \lambda - \alpha +1)} \label{e:one} >1\\
	b_0^2 & = \frac{(2 - \alpha \lambda - \alpha) - (1-\lambda)\sqrt{-3\alpha^2 + 4\alpha }}{2(-\alpha \lambda^2 + \alpha \lambda - \alpha +1)} \label{e:two} \in [0,1]
\end{align}
\end{lemma}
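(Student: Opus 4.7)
The plan is to treat both $b_0^1$ and $b_0^2$ as the roots of the same quadratic in $b$ (namely $(-\alpha\lambda^2 + \alpha\lambda - \alpha + 1)b^2 + (\alpha\lambda + \alpha - 2)b + (1 - \alpha) = 0$), establish the bound on $b_0^1$ by direct algebraic manipulation of the radical expression, and then deduce the bound on $b_0^2$ from Vieta's formulas so I do not have to redo the radical analysis. A preliminary observation I would make up front is that the common denominator $2(1 - \alpha(1 - \lambda + \lambda^2))$ is strictly positive on the stated parameter range: since $1 - \lambda + \lambda^2 = (\lambda - 1/2)^2 + 3/4 \in [3/4, 1]$ and $\alpha < 1$, we get $\alpha(1 - \lambda + \lambda^2) \le \alpha < 1$. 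This lets me cross-multiply freely without sign flips.

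For the first claim $b_0^1 > 1$, after cross-multiplying and cancelling common terms I reduce the inequality to
\[
(1-\lambda)\sqrt{4\alpha - 3\alpha^2} \; > \; \alpha(1-\lambda)(2\lambda - 1).
\]
Dividing by $(1-\lambda) > 0$ leaves $\sqrt{4\alpha - 3\alpha^2} > \alpha(2\lambda - 1)$. I would split on the sign of $2\lambda - 1$: when $\lambda \le 1/2$ the right-hand side is non-positive while the left-hand side is strictly positive (since $4\alpha - 3\alpha^2 = \alpha(4-3\alpha) > 0$), so the inequality is immediate; when $\lambda > 1/2$ both sides are positive and squaring is valid, yielding $4 > \alpha(3 + (2\lambda - 1)^2)$, which holds because $\alpha < 1$ and $(2\lambda - 1)^2 < 1$.

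For the second claim $b_0^2 \in [0,1]$, I would avoid reworking radicals and instead apply Vieta's formulas to the quadratic above. The product of roots is $\frac{1-\alpha}{1 - \alpha(1-\lambda+\lambda^2)}$ and the sum of roots is $\frac{2 - \alpha - \alpha\lambda}{1 - \alpha(1-\lambda+\lambda^2)}$; both are strictly positive under the hypotheses, so both roots are positive, giving $b_0^2 > 0$. For the upper bound, I would observe that since $1 - \lambda + \lambda^2 \le 1$, the denominator satisfies $1 - \alpha(1-\lambda+\lambda^2) \ge 1 - \alpha$, hence the product of the roots is at most $1$. Combined with $b_0^1 > 1$ already proved, this forces $b_0^2 \le 1/b_0^1 < 1$.

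The only step that requires genuine care is the squaring in the $\lambda > 1/2$ subcase of Part 1, where I must check that both sides are non-negative before squaring and that the resulting inequality survives on the whole parameter box; everything else is routine once the denominator's positivity has been pinned down. I expect this will be the main obstacle, though a mild one, because the expression $3 + (2\lambda-1)^2$ could in principle exceed $4$ elsewhere, and I need to use $\lambda < 1$ strictly (not $\le 1$) to keep the bound $< 4$.
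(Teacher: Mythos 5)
Your proposal is correct, and it actually repairs the argument printed in Appendix I rather than merely paralleling it. For $b_0^1>1$ you follow the same strategy as the paper --- determine the sign of $N_1-M_1$, split on whether $\lambda$ is below or above $1/2$, and square in the second case --- but your expansion $N_1-M_1=(1-\lambda)\bigl(\sqrt{4\alpha-3\alpha^2}-\alpha(2\lambda-1)\bigr)$ is the correct one, whereas the appendix obtains $(\lambda-1)\bigl(\sqrt{4\alpha-3\alpha^2}+\alpha(2\lambda-1)\bigr)$; that sign slip reverses the conclusion, so the printed proof ends by asserting $b_0^1\in[0,1]$ and $b_0^2>1$, the opposite of the lemma as stated (a spot check at $\alpha=1/3$, $\lambda=0$ gives $b_0^1=2$ and $b_0^2=1/2$, so the statement, and your proof, are the right way around). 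For the second bound you genuinely diverge from the paper: instead of re-running the radical estimates on $N_2$ and $M_2$, you apply Vieta's formulas to the quadratic $(-\alpha\lambda^2+\alpha\lambda-\alpha+1)b^2+(\alpha\lambda+\alpha-2)b+(1-\alpha)=0$ to get positivity of both roots and $b_0^1b_0^2=\frac{1-\alpha}{1-\alpha(1-\lambda+\lambda^2)}\le 1$, whence $0<b_0^2\le 1/b_0^1<1$. This buys a shorter proof with a single case analysis overall, and your justification of the denominator's positivity via $1-\lambda+\lambda^2=(\lambda-\tfrac{1}{2})^2+\tfrac{3}{4}\in[\tfrac{3}{4},1]$ is cleaner than the paper's factoring. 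The squaring step you flag as delicate is handled correctly: both sides are positive exactly when $\lambda>1/2$, and $\alpha<1$ together with $(2\lambda-1)^2<1$ gives $\alpha\bigl(3+(2\lambda-1)^2\bigr)<4$ as needed.
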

\begin{proof}
We start with proving $b_0^1 > 1$. We first prove $b_0^1 \geq 0$
by showing both the numerator and dominator are positive. 
For simplicity, we use $N_1$ and $M_1$ to 
denote the numerator and denominator of $b_0^1$ in (\ref{e:one}), 
and  $N_2$ and $M_2$ the numerator and denominator of $b_0^2$
in (\ref{e:two}). 
Note that the following reasoning relies on that $\alpha \in (0,1), \lambda\in[0,1)$. 
\begin{enumerate}
	\item $N_1 > 0$. First, we have 
	\[
	\begin{split}
	& N_1 \times N_2 \\
	= ~& (2 - \alpha \lambda - \alpha)^2 - (1-\lambda)^2(-3\alpha ^2 + 4\alpha  ) \\
	= ~& 4 \alpha  \lambda (1-\lambda)(1-\alpha ) +  4 (1-\alpha )^2 \\
	> ~& 0
	\end{split} \]
	Moreover, it is easy to see $N_2 > 0$. Therefore, we can conclude 
	that  $N_1$ is also positive.  
	  
	\item $M_1 > 0$.  
	$2(-\alpha \lambda^2 + \alpha \lambda - \alpha  +1) = 2(\alpha \lambda (1 - \lambda) + (1- \alpha )) $, which is positive. 
\end{enumerate} 
In summary, both the numerator and the denominator of $b_0^1$ in 
(\ref{e:one}) are positive, so $b_0^1 \geq 0$.
Next we prove  $b_0^1 \leq 1$ by showing $N_1 - M_1 \leq 0$: 
\[\begin{split} 
&N_1 - M_1 \\
= ~& (\lambda-1) (\sqrt{-3\alpha ^2 + 4\alpha  }  + \alpha (2\lambda - 1))  
\end{split} \]
which is negative if $\lambda \geq 0.5$ (since $\lambda-1 < 0$ and $\sqrt{-3\alpha ^2 + 4\alpha }  + \alpha (2\lambda - 1) \geq  0 $). So in the following we focus on the case of  $\lambda < 0.5$. Since $\lambda < 0.5$, we know $\alpha  (2\lambda - 1)$ is negative, so we define two positive number $A$ and $B$ as follows
	\begin{align}
	A  &= \sqrt{-3\alpha^2 + 4\alpha }  \\
	B  &= \alpha(1 - 2\lambda)
	\end{align}
	so $N_1 - M_1 = (\lambda-1)(A - B)$. Since $\lambda-1 < 0$, we only need to prove $A - B > 0 $, which is equivalent to proving $A^2 - B^2 > 0 $ (as both $A$ and $B$ are positive):
	$A^2 - B^2 > 0 $, which is done as follows:
	\[\begin{split}
	A^2 - B^2  = &-3\alpha^2 + 4\alpha  - \alpha^2(2\lambda - 1)^2 \\
	=  &4\alpha (1 - \alpha) + 4\alpha^2\lambda(1 - \lambda)\\
	>  & 0 
	\end{split}\]
	so we have $A - B > 0$ and thus $N_1 - M_1 = (\lambda-1)(A - B) < 0$. 
	In summary, we have proved $N_1 - M_1 < 0$ for the cases of both 
	$\lambda \geq 0.5$ and $\lambda < 0.5$, so we know $b_0^1 \in[0, 1]$.

%
	Next we prove $b_0^2 > 1$, by showing $N_2 - M_2 > 0$ 
	\[\begin{split}
	& N_2 - M_2 \\
	= ~& (1 - \lambda)( \sqrt{-3\alpha^2 + 4\alpha }  - \alpha(2\lambda-1))
	\end{split}\]
	If $\lambda \leq 0.5$, then $\sqrt{-3\alpha^2 + 4\alpha }  - \alpha(2\lambda-1) > 0$, and since $1-\lambda > 0$ we have $N_2 - M_2 > 0$.
	If $\lambda > 0.5$, we let $C = \alpha(2\lambda -1) > 0$ and also use $A$
	as defined above, $N_2 - M_2 = (1 - \lambda)(A - C)$. 
	To prove $A - C > 0$, it suffices to prove $A^2 - C^2 > 0$, as shown in the following:
	\[\begin{split}
	A^2 - C^2 = ~&  - 3\alpha^2 + 4\alpha -  \alpha^2(2\lambda -1)^2 \\
	= ~&  4\alpha - (3 + (2\lambda -1)^2)\alpha^2 \\
	> ~&  4\alpha - 4\alpha^2 ~~(\lambda < 1 \textrm{ ,so } 2\lambda-1 < 1) \\
	> ~&  0
	\end{split}\]
	By now we have proved $ N_2 - M_2$ for both cases of $\lambda \leq 0.5$ and $\lambda > 0.5$, so we known $b_0^2 > 1$.

\end{proof}

\begin{figure*}[t]
\centering
\begin{subfigure}[b]{0.62\columnwidth}
 \centering
 \includegraphics[width=\columnwidth]{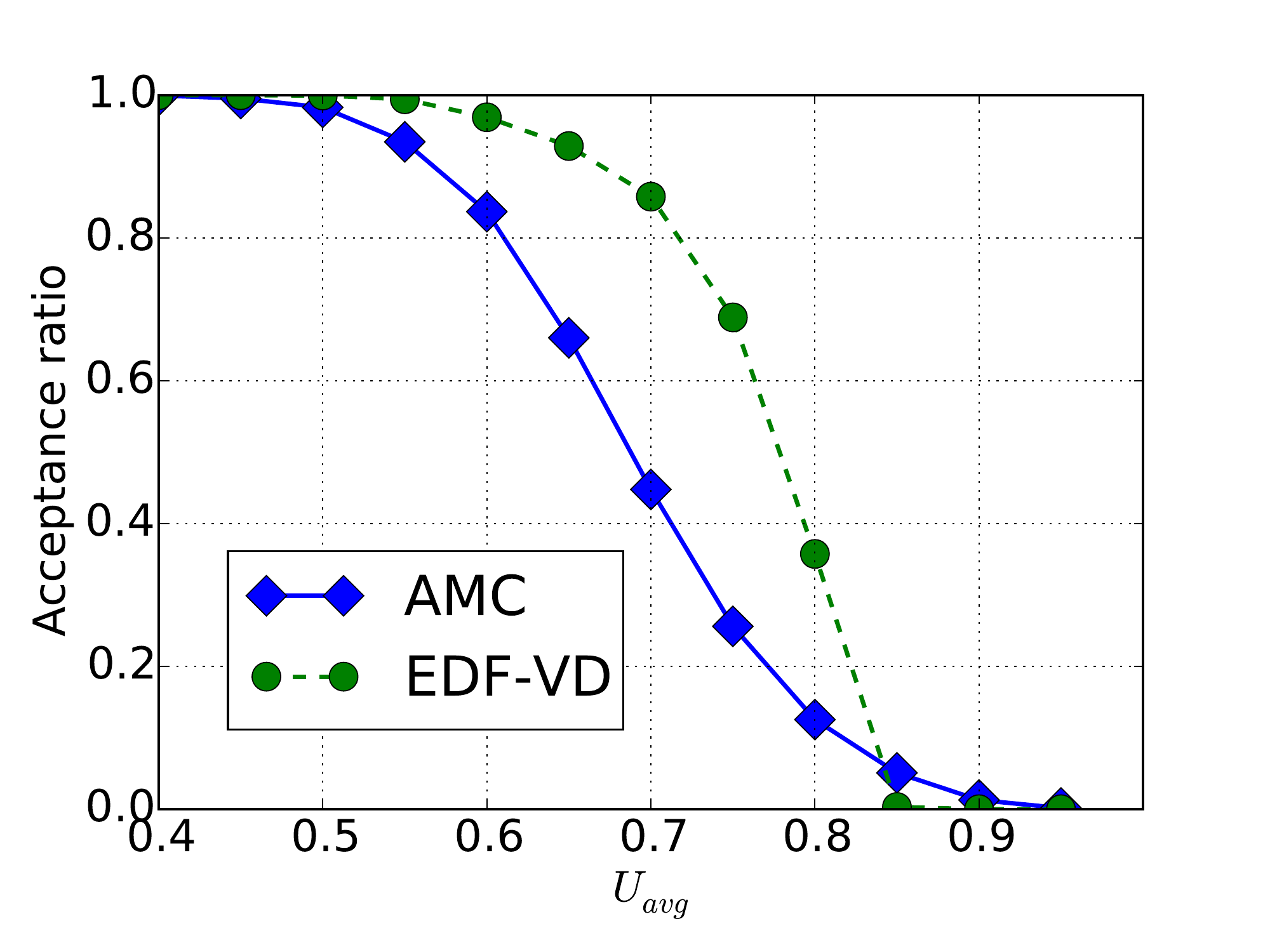}
 \caption{$\lambda = 0.3$}
 \label{03amc03}
\end{subfigure}
\begin{subfigure}[b]{0.62\columnwidth}
 \centering
\includegraphics[width=\columnwidth]{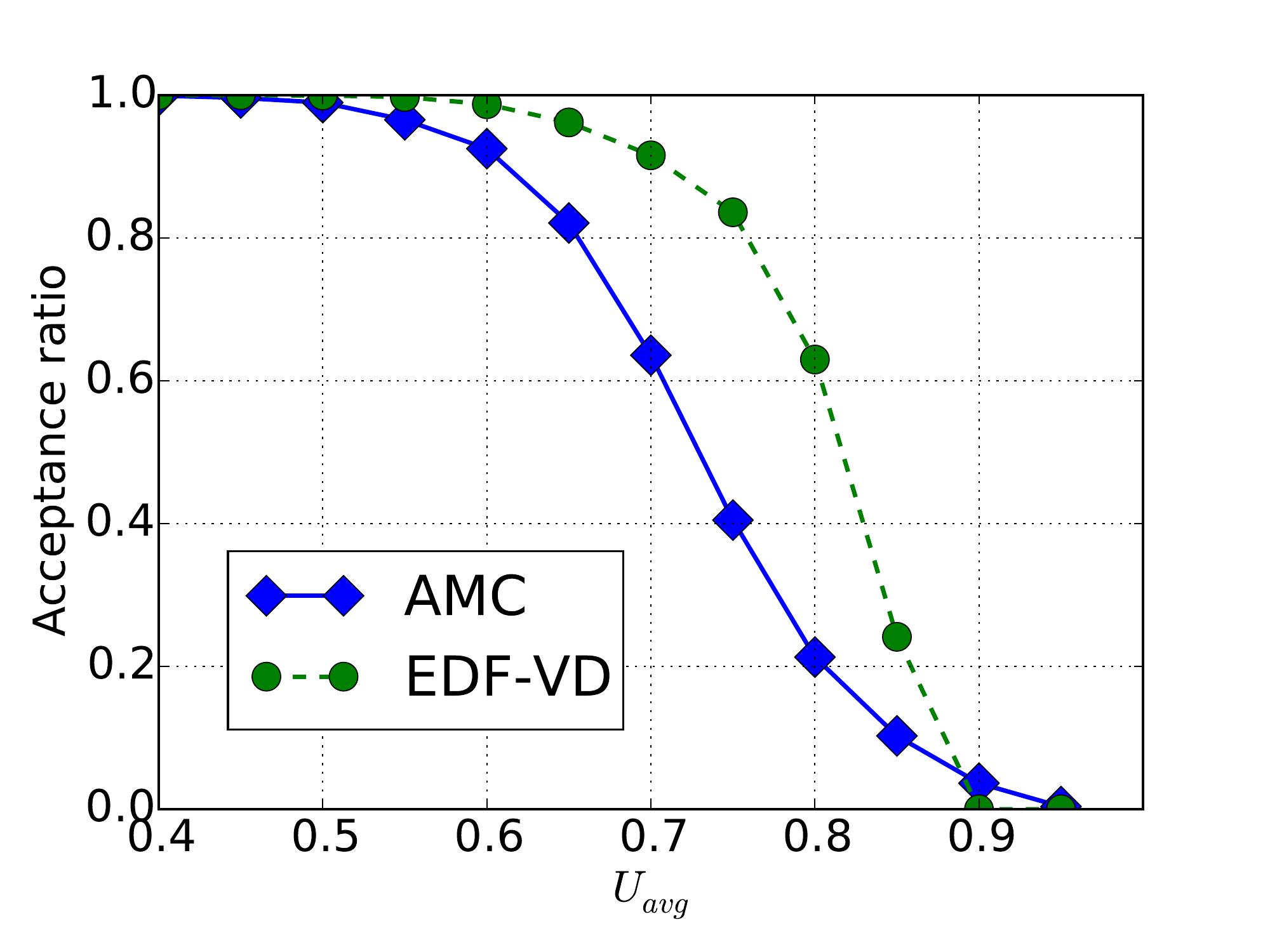}
\caption{$\lambda = 0.5$}
 \label{03amc05}
\end{subfigure}
\begin{subfigure}[b]{0.62\columnwidth}
 \centering
\includegraphics[width=\columnwidth]{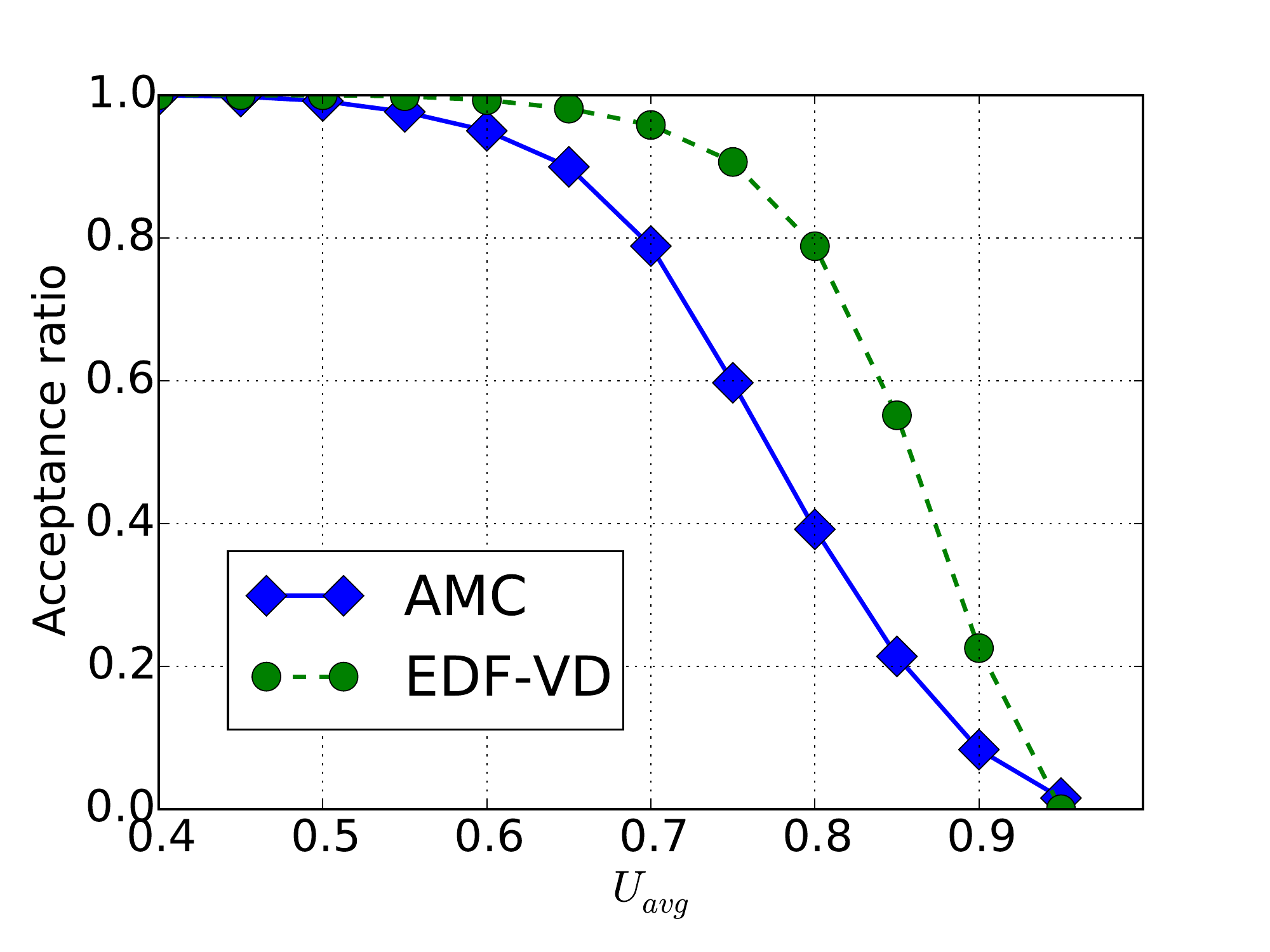}
\caption{$\lambda = 0.7$}
 \label{03amc05}
\end{subfigure}
\caption{Varying $U_B$ with different $\lambda$ and pcriticality=0.3}
 \label{figure:utilization_03}
\end{figure*}

\section*{Appendix II}
The following is the complete proof of Lemma \ref{lemma:emc}.
\begin{proof}
 We use the same notations explained in Section \ref{section:sufficientTest}.
When $u_i^{LO}=u_i^{HI}$, it is trivial to see that Lemma \ref{lemma:low_crit} holds for the EMC model. Now we focus on the case when $u_i^{LO} > u_i^{HI}$
To prove this case, we need to consider two cases where \lo task $\tau_i$ releases a job within interval $(a_1,t_2]$ or it does not.
\begin{itemize}
\item \textbf{Case 1} ($\tau_i$ releases a job within interval $(a_1,t_2]$): If there is no \textsf{carry-over job} , the proof is the same as we have proved for the IMC model (see the proof of Sub-case 1 in Lemma \ref{lemma:low_crit}). Here, we focus on the case that there is a \textsf{carry-over job}. Let $J_i$ denote the \textsf{carry-over job} with absolute release time $a_i$, original deadline $d_i$, and
maximum deadline $d_i^{max}$.
Here, we consider two cases, $d_i^{max} >t_2$ and $d_i^{max} \le t_2$.
\begin{itemize}
 \item $d_i^{max} >t_2$: since $t_2$ is a deadline miss, a job with deadline $> t_2$ will not be scheduled within $[t_1,t_2)$ -see Proposition \ref{proposition:t1_t2}.
If $d_i^{max} >t_2$, job $J_i$ will not be executed after the
switch time instant $t_1$ and the maximum cumulative execution time of $\tau_i$ can be obtained as job $J_i$ completes its $C_i^{LO}$ before $t_1$.
Hence, the cumulative execution of task $\tau_i$ can
be bounded by,
\begin{equation}
\label{eq:emc_case2_1}
 \eta_i \le a_i\cdot u_i^{LO} + (d_i - a_i)u_i^{LO} = d_i \cdot u_i^{LO}
\end{equation}
By Proposition \ref{proposition:emc_carry}, we
replace $d_i$ with $(a_1+x(t_2-a_1))$ in Eq. (\ref{eq:emc_case2_1})
\begin{equation}
\small
\begin{split}
\label{eq:emc_second}
&\eta_i \le (a_1+x(t_2-a_1)) u_i^{LO} + (t_2 - (a_1+x(t_2-a_1)))u_i^{HI} \\
\Leftrightarrow & \eta_i \le (a_1 + x(t_2 - a_1))u_i^{LO} + (1 - x)(t_2 - a_1)u_i^{HI}
\end{split}
\end{equation}

\item $d_i^{max} \le t_2$: in this case, the cumulative execution of \lo task $\tau_i$ can be bounded as follows:
\begin{equation}
\small
 \label{eq:emc_final}
  \begin{split}
  & \eta_i \le a_i u_i^{LO} + (t_2 - a_i)u_i^{HI} \\
  &( \text{Since }a_i < t_1 \\
  &\text{ and } t_1 < (a_1+x(t_2-a_1)) \text{ from Proposition \ref{proposition:case_1_2}}) \\
  \Rightarrow& \eta_i \le \big(a_1 + x(t_2 - a_1)\big)u_i^{LO} + \big(t_2 - \big(a_1 + x(t_2 - a_1)\big)\big)u_i^{HI} \\
  \Leftrightarrow & \eta_i \le (a_1 + x(t_2 - a_1))u_i^{LO} + (1 - x)(t_2 - a_1)u_i^{HI}
  \end{split}
\end{equation}
\end{itemize}
 \item \textbf{Case 2} ($\tau_i$ does not release a job within interval $(a_1,t_2]$):
For \lo task $\tau_i$, let $J_i$ denote the last release job before $a_1$, where $a_i(<a_1)$ and $d_i$ are the absolute release time and deadline of $J_i$, respectively. Moreover, let $d_i^{max}(>d_i)$ denote the new absolute deadline of job $J_i$ as the system switches to \hi mode. Here, there are two cases, $d_i \le t_1$ and $d_i > t_1$. For $d_i\le t_1$, the cumulative execution of task $\tau_i$ can be computed as follows:
\begin{equation}
\label{eq:emc_case0}
 \eta_i = d_i\cdot u_i^{LO}
\end{equation}
For $d_i > t_1$, if $d_i^{max}\le t_2$, then the maximum cumulative execution can be bounded as follows:
\[\small
\begin{split}
   &\eta_i \le a_i\cdot u_i^{LO} + (d_i^{max} - a_i)u_i^{HI}\\
  \Rightarrow &\eta_i \le a_i\cdot u_i^{LO} + (t_2 - a_i)u_i^{HI} \quad(\text{since }d_i^{max}\le t_2)
  \end{split}\]
Since $a_i < t_1 \le (a_1 +x(t_2 -a_1))$ by Proposition \ref{proposition:case_1_2}, we obtain
\begin{equation}
\label{eq:emc_first}
\small
\begin{split}
& \eta_i \le (a_1 +x(t_2 -a_1))u_i^{LO} + (t_2 - (a_1+x(t_2-a_1)))u_i^{HI} \\
\Leftrightarrow & \eta_i \le (a_1 + x(t_2 - a_1))u_i^{LO} + (1 - x)(t_2 - a_1)u_i^{HI}
\end{split}
\end{equation}
If $d_i^{max}> t_2$, our reasoning is similar to the case discussed in Case 1.
The maximum cumulative execution happens to that $J_i$ completes its execution before $t_1$.
Similarly, in this case, its cumulative execution can be upper bounded by
\[
\eta_i \le a_i u_i^{LO} + (d_i-a_i)u_i^{LO}\]
By Proposition \ref{proposition:emc_carry}, we obtain
\[\small
\begin{split}
&\eta_i \le (a_1+x(t_2-a_1)) u_i^{LO} + (t_2 - (a_1+x(t_2-a_1)))u_i^{HI} \\
\Leftrightarrow & \eta_i \le (a_1 + x(t_2 - a_1))u_i^{LO} + (1 - x)(t_2 - a_1)u_i^{HI}
\end{split}
\]
\end{itemize}
The above discussion proves that
Lemma 1 still holds for \lo tasks of the EMC model.
\end{proof}

\section*{Appendix III}
Experimental results between EDF-VD and AMC is depicted in Figure \ref{figure:utilization_03}, where pCriticality$=0.3$.

\end{document}